\newfont{\ffont}{cmr10}
\newcommand{\cH}{\mathcal{H}}
\newcommand{\cP}{\mathcal{P}}
\newcommand{\cF}{\mathcal{F}}
\newcommand{\cS}{\mathcal{S}}
\newcommand{\cO}{\mathcal{O}}
\newcommand{\cX}{\mathcal{X}}
\newcommand{\cU}{\mathcal{U}}
\newcommand{\cT}{\mathcal{T}}
\newcommand{\cM}{\mathcal{M}}
\newcommand{\al}{\alpha}
\newcommand{\om}{\omega}
\newcommand{\ga}{\gamma}
\newcommand{\del}{\delta}
\newcommand{\ka}{\kappa}
\newcommand{\Om}{\Omega}
\newcommand{\si}{\sigma}
\newcommand{\La}{\Lambda}
\newcommand{\Ups}{\Upsilon}
\newcommand{\Del}{\Delta}
\newcommand{\DelO}{\bar{\Delta}}
\newcommand{\R}{\mathbb{R}}
\newcommand{\N}{\mathbb{N}}
\newcommand{\E}{\mathbb{E}}
\newcommand{\bP}{\mathbb{P}}
\newcommand{\la}{\lambda}
\newcommand{\eps}{\varepsilon}
\newcommand{\ti}{\times}
\newcommand{\argmin}{\mathrm{argmin}}
\newcommand{\Fin}{\mathrm{Fin}}
\newcommand{\rank}{\mathrm{Rank}}
\newcommand{\Ch}{\mathrm{Ch}}
\newcommand{\xspace}{\hbox{\kern-2.5pt}}
\begin{document}

\theoremstyle{definition}
\newtheorem{definition}{Definition}[section]
\newtheorem{example}[definition]{Example}
\newtheorem{exercise}[definition]{Exercise}
\newtheorem{remark}[definition]{Remark}

\theoremstyle{plain}
\newtheorem{theorem}[definition]{Theorem}
\newtheorem{conjecture}[definition]{Conjecture}
\newtheorem{proposition}[definition]{Proposition}
\newtheorem{corollary}[definition]{Corollary}
\newtheorem{lemma}[definition]{Lemma}

\title[Dimensionality reduction with subgaussian matrices]{Dimensionality reduction with subgaussian matrices: a unified theory}
\author{Sjoerd Dirksen}

\address{Universit\"{a}t Bonn\\
Hausdorff Center for Mathematics\\
Endenicher Allee 60\\
53115 Bonn\\
Germany} \email{sjoerd.dirksen@hcm.uni-bonn.de}

\thanks{This research was supported by SFB grant 1060 of the Deutsche Forschungsgemeinschaft (DFG)}
\keywords{Random dimensionality reduction, Johnson-Lindenstrauss embeddings, restricted isometry properties, union of subspaces}
\maketitle

\begin{abstract}
We present a theory for Euclidean dimensionality reduction with subgaussian matrices which unifies several restricted isometry property and Johnson-Lindenstrauss type results obtained earlier for specific data sets. In particular, we recover and, in several cases, improve results for sets of sparse and structured sparse vectors, low-rank matrices and tensors, and smooth manifolds. In addition, we establish a new Johnson-Lindenstrauss embedding for data sets taking the form of an infinite union of subspaces of a Hilbert space.
\end{abstract}

\section{Introduction}

The analysis of high-dimensional data leads to various computational issues which are gathered informally under the term `curse of dimensionality'. To circumvent such issues, several methods have been proposed to reduce the dimensionality of the data, i.e., to map the data set in a lower-dimensional space while approximately preserving certain relevant properties of the set. This is often possible as many high-dimensional data sets possess some additional structure which ensures that it has a low `intrinsic dimension' or `complexity'.\par
This paper concerns the \emph{random} dimensionality reduction method, which seeks to embed a data set using a random linear map. Undoubtedly the most famous known result in this direction is the classical embedding of Johnson and Lindenstrauss \cite{JoL84}. They showed that if $\Phi$ is the orthogonal projection onto an $m$-dimensional subspace of $\R^n$ which is chosen uniformly at random, then with high probability $\Phi$ preserves the pairwise distances in a given finite subset $\cP$ of $\R^n$ up to a multiplicative (or relative) error $\eps$ provided that $m\geq C\eps^{-2}\log|\cP|$, where $|\cP|$ is the cardinality of $\cP$. Simpler proofs of this result later appeared in \cite{DaG03,FrM88}. Due to these historic origins, many authors refer to random dimensionality reduction as the `random projection method'. However, it is well-known that one can replace the random projection matrix $\Phi$ by a computationally more attractive subgaussian matrix. These matrices perform equally well, as stated in the following modernized version of the Johnson-Lindenstrauss embedding \cite{InN07,Mat08}.
\begin{theorem}[Johnson-Lindenstrauss embedding]
\label{thm:JLclassicalIntro}
Let $\cP$ be a set of $|\cP|$ points in $\R^n$. Let $\tilde{\Phi}$ be an $m\ti n$ matrix with entries $\tilde{\Phi}_{ij}$ which are independent, mean-zero, unit variance and $\sqrt{\al}$-subgaussian. Set $\Phi=\tfrac{1}{\sqrt{m}}\tilde{\Phi}$. There exists an absolute constant $C>0$ such that for any given $0<\eps,\eta<1$ we have
\begin{equation}
\label{eqn:distPresIntro}
(1-\eps)\|x-y\|_2^2 \leq \|\Phi(x) - \Phi(y)\|_2^2 \leq (1+\eps)\|x-y\|_2^2 \qquad \mathrm{for \ all} \ x,y\in\cP
\end{equation}
with probability $1-\eta$, provided that
\begin{equation}
\label{eqn:condmIntro}
m\geq C \al^2\eps^{-2} \ \max\{\log |\cP|, \log(\eta^{-1})\}.
\end{equation}
\end{theorem}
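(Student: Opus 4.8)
The plan is to reduce the statement to a concentration inequality for a single fixed vector and then finish with a union bound. First I would fix distinct $x,y\in\cP$, set $z=x-y$, and note that, since every term in \eqref{eqn:distPresIntro} is $2$-homogeneous in $z$, we may assume $\|z\|_2=1$. Writing $\tilde\Phi_1,\dots,\tilde\Phi_m$ for the rows of $\tilde\Phi$, we have
\[
\|\Phi z\|_2^2 = \frac1m\sum_{i=1}^m \langle \tilde\Phi_i, z\rangle^2,
\]
and the scalars $\xi_i:=\langle\tilde\Phi_i,z\rangle$ are independent. Since the entries of $\tilde\Phi_i$ are independent, mean-zero, unit variance and $\sqrt\al$-subgaussian, a standard estimate for linear combinations of independent subgaussian random variables shows that each $\xi_i$ is mean-zero, has variance $\|z\|_2^2=1$, and is $C\sqrt\al$-subgaussian. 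Consequently $\xi_i^2-1$ is a centered subexponential random variable with subexponential norm $\lesssim\al$.

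Next I would apply Bernstein's inequality for sums of independent centered subexponential variables to $\frac1m\sum_{i=1}^m(\xi_i^2-1)$, which yields
\[
\bP\big(\,\big|\,\|\Phi z\|_2^2-1\,\big|>\eps\,\big) \le 2\exp\!\big(-c\,m\min\{\eps^2/\al^2,\ \eps/\al\}\big) \le 2\exp(-c\,m\eps^2/\al^2),
\]
where in the last step I use $0<\eps<1$ together with the harmless normalization $\al\ge 1$, and $c>0$ is absolute. This is precisely the distributional (single-vector) form of the Johnson--Lindenstrauss lemma. Finally I would take a union bound over the at most $\binom{|\cP|}{2}\le|\cP|^2$ difference vectors $z=x-y$ (the case $x=y$ being trivial): the probability that \eqref{eqn:distPresIntro} fails for some pair is at most $2|\cP|^2\exp(-c\,m\eps^2/\al^2)$, and bounding this by $\eta$ gives $m\ge c^{-1}\al^2\eps^{-2}(\log 2+2\log|\cP|+\log(\eta^{-1}))$, which is implied by \eqref{eqn:condmIntro} after absorbing constants.

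The only substantive step is the concentration estimate: one must verify that the linear form $\xi_i$ inherits a subgaussian bound with the correct $\sqrt\al$ scaling, that its square is subexponential with norm of order $\al$, and that Bernstein's inequality then produces the tail $\exp(-c\,m\eps^2/\al^2)$ with an \emph{absolute} exponent constant (this is where the condition $\eps<1$, which lets one discard the $\eps/\al$ branch of the Bernstein bound, is used). Everything else — the homogenization in $z$, the reduction to pairwise differences, and the union bound — is routine; I expect the only mild annoyance to be the constant bookkeeping needed to pass from $\log 2+2\log|\cP|+\log(\eta^{-1})$ to the clean form $\max\{\log|\cP|,\log(\eta^{-1})\}$ (using $|\cP|\ge 2$, and treating $|\cP|\le 1$ separately), so that a single absolute $C$ suffices in \eqref{eqn:condmIntro}.
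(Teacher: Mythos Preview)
Your argument is correct and is essentially the classical route to Theorem~\ref{thm:JLclassicalIntro} (single-vector Bernstein concentration plus a union bound over the $\binom{|\cP|}{2}$ pairs). The paper, however, does not prove Theorem~\ref{thm:JLclassicalIntro} this way: it is recovered as the simplest instance of the general machinery, by applying Corollary~\ref{cor:RIPepsIsom} (itself a consequence of the master bound Theorem~\ref{thm:KappaGeneral} and the empirical-process tail bound Theorem~\ref{thm:supAverages}) together with the trivial estimate $\ga_2^2(\cP_{nc},d_2)\lesssim \log|\cP_{nc}|\leq 2\log|\cP|$ for a finite set. Your approach is more elementary and self-contained---it needs only scalar Bernstein and avoids the $\ga_2$-functional and generic chaining entirely---but it is intrinsically tied to finite $\cP$ through the union bound. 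The paper's route is considerable overkill for this particular statement, but that is precisely the point: it shows that the finite Johnson--Lindenstrauss lemma, the RIP for sparse vectors, the embedding of unions of subspaces, and the manifold results all flow from a single bound on $\ka_{\cS,\Phi}$, with the set-specific work isolated in the estimation of $\ga_2(\cS,d_H)$.
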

If $\eps_{\cP,\Phi}$ denotes the smallest constant such that (\ref{eqn:distPresIntro}) holds, then we can formulate Theorem~\ref{thm:JLclassicalIntro} compactly by saying that $\bP(\eps_{\cP,\Phi}\geq \eps)\leq \eta$ if $m$ satisfies (\ref{eqn:condmIntro}). All results in this paper will be phrased in this manner. In general, the dependence on the number of points $|\cP|$ in (\ref{eqn:condmIntro}) cannot be improved \cite{Alo03} and the dependence on $\eps$ and $\eta$ is already optimal if $|\cP|=1$ \cite{JaW13}.\par
Random dimensionality reduction is attractive for several reasons. It is easy to implement and computationally inexpensive in comparison with other dimensionality reduction methods such as principal component analysis, see e.g.\ \cite{BiM01} for an empirical comparison on image and text data. Moreover, the method is \emph{non-adaptive} or \emph{oblivious} to the data set, meaning that the method does not require any prior knowledge of the data set as input. It is therefore particularly suitable as a preprocessing step. Methods incorporating random dimensionality reduction have been proposed for a wide range of tasks, such as approximate nearest-neighbor search \cite{InM98,InN07}, learning mixtures of Gaussians \cite{Das99,KMV12}, clustering \cite{BDL08,Sch00}, manifold learning \cite{BHW07,Ver13}, matched field processing \cite{Man13,MRS12} and least squares regression \cite{DMM11,MaM12}. Various other applications can be found in \cite{Vem04}. Sometimes these methods are coined a `compressive' version of the original method, e.g.\ compressive matched field processing \cite{MRS12}. Several of these applications rely on extensions of Theorem~\ref{thm:JLclassicalIntro} to an infinite, but structured data set $\cP$. In these results the factor $\log|\cP|$ in (\ref{eqn:condmIntro}) is replaced by a different quantity that represents the intrinsic dimension of the data set $\cP$. For instance, if $\cP$ is a $K$-dimensional subspace, then one can show that $\bP(\eps_{\cP,\Phi}\geq \eps)\leq \eta$ if $m\geq C \al^2\eps^{-2} \ \max\{K, \log(\eta^{-1})\}$. Such Johnson-Lindenstrauss embeddings for subspaces were introduced in \cite{Sar06} for use in numerical linear algebra. We also mention the embedding results for smooth manifolds \cite{BaW09,Cla08,EfW13} which are motivated by manifold learning. In a slightly different vein, some authors have investigated lower bounds on $m$ that guarantee that a subgaussian matrix preserves pairwise distances up to an \emph{additive} rather than a multiplicative error \cite{AHY13,InN07}.\par
In the signal processing literature some closely related results appear in the form of \emph{restricted isometry properties}. Recall that a map $\Phi:\R^n\rightarrow \R^m$ satisfies a restricted isometry property with constant $\del$ on a set $\cP$ if
$$(1-\del)\|x\|_2^2 \leq \|\Phi(x)\|_2^2 \leq (1+\del)\|x\|_2^2 \qquad \mathrm{for \ all} \ x\in\cP.$$
It is a well-known result from compressed sensing that a subgaussian matrix $\Phi$ satisfies the restricted isometry property on the set of $s$-sparse vectors in $\R^n$ with probability $1-\eta$ if $m\geq C \al^2\del^{-2} \ \max\{s\log(n/s), \log(\eta^{-1})\}$ \cite{BDD08,CaT06,Don06,MPT07,MPT08}. This property implies that with high probability one can recover any $s$-sparse signal in a stable, robust and algorithmically efficient manner from $m\sim s\log(n/s)$ subgaussian measurements, see e.g.\ \cite[Chapter 6]{FoR13} and the references therein. Inspired by these developments restricted isometry properties of subgaussian matrices have been established for various signal sets with structured sparsity \cite{BlD09,ElM09,GNE13}. Another example is the restricted isometry property of subgaussian matrices acting on low-rank matrices. This result is used as a substitute for the restricted isometry property on sparse vectors in the low-rank matrix recovery literature \cite{CaP11,RFP10}.\par
The purpose of this paper is to give a unified treatment of the aforementioned collection of restricted isometry and Johnson-Lindenstrauss type properties for subgaussian matrices, as well as their `additive error counterparts'. In Theorem~\ref{thm:KappaGeneral} we formulate a `master bound' from which one can deduce these properties for subgaussian maps (as in Definition~\ref{def:subgaussianMap}) acting on any given data set in a possibly infinite-dimensional Hilbert space. This result is an extension of earlier work in \cite{Gor88,KlM05,MPT07}, see the discussion after Theorem~\ref{thm:KappaGeneral} for details. We give a transparent proof using a new tail bound for suprema of empirical processes from \cite{Dir13}. The main focus of our work is to make Theorem~\ref{thm:KappaGeneral} an accessible tool for non-specialists, by demonstrating extensively how to apply it to extract results for concrete data structures. As it turns out, in all considered applications we recover the best known results in the literature and often find an improved lower bound on the target dimension $m$. On several occasions we also extend earlier results for Gaussian matrices to general subgaussian matrices. This class contains computationally more efficient matrices than Gaussian matrices, see \cite{Ach03} and Example~\ref{exa:Achlioptas}. Moreover, the extension to subgaussian matrices is of interest for certain signal processing applications, see \cite[Section 1.2]{FoR13} and \cite{RWH10} for examples.\par
To conclude, we give a brief overview of the considered applications. In Section~\ref{sec:covDim} we consider sets with low covering dimension. This class of sets includes finite unions of subspaces, with sets of sparse and cosparse vectors as particular examples, as well as low-rank matrices and tensors. In Section~\ref{sec:unionsubspaces} we consider sets forming an \emph{infinite} union of subspaces of a Hilbert space. A wide variety of models in signal processing can be expressed in this form. For instance, signals exhibiting structured sparsity, piecewise polynomials, certain finite rate of innovations models and overlapping echoes can be described in this fashion \cite{Blu11,BlD09,ElM09,LuD08}. The main result in this section, Theorem~\ref{thm:embedUnion}, establishes a new Johnson-Lindenstrauss type embedding for an infinite union of subspaces of a Hilbert space. The embedding dimension in this result depends on the maximal dimension of the subspaces and the complexity of the index set measured in terms of the Finsler distance, which is related to the largest principal angles between the subspaces. Our result significantly improves upon recent work in this direction \cite{Man13,MaR13}, see the discussion after Theorem~\ref{thm:embedUnion}. By combining with the main result of \cite{Blu11} we deduce that one can robustly reconstruct a signal in an infinite union of subspaces from a small number of subgaussian measurements using a generalized iterative hard thresholding method, see Remark~\ref{rem:Landweber}. Finally, in Section~\ref{sec:manifolds} we deduce three different dimensionality reduction results for smooth submanifolds of $\R^n$. We first deduce a guarantee under which lengths of curves in the manifold are preserved uniformly by a subgaussian map. Further on we give conditions under which pairwise ambient distances are preserved up to an additive error. Finally, we establish Johnson-Lindenstrauss embedding results for smooth manifolds. We first give an improvement of the embedding result for manifolds with low linearization dimension from \cite{AHY13}. In Theorems~\ref{thm:manifoldIota} and \ref{thm:manifoldReach} we present embeddings in the spirit of \cite{BaW09,Cla08,EfW13}. In particular, we extend the recent result of \cite{EfW13} from Gaussian to subgaussian matrices and achieve optimal scaling in the error parameter $\eps$.

\section{Preliminaries and notation}

Throughout the paper we use the following terminology. We use $(\Om,\cF,\bP)$ to denote a probability space and write $\E$ for the expected value. For a real-valued random variable $X$ we define its $\psi_2$ or \emph{subgaussian norm} by
$$\|X\|_{\psi_{2}} = \inf\{C>0 \ : \ \E\exp(|X|^2/C^2)\leq 2\}.$$
If $\|X\|_{\psi_{2}}<\infty$ then we call $X$ a \emph{subgaussian} random variable. In particular, any centered Gaussian random variable $g$ with variance $\si^2$ is subgaussian and $\|g\|_{\psi_{2}}\lesssim \si$. Also, if $|X|$ is bounded by $K$ then $X$ is subgaussian and $\|X\|_{\psi_2}\lesssim K$. We call a random vector $X:\Om\rightarrow \R^n$ subgaussian if
$$\sup_{\|x\|_2\leq 1}\|\langle X,x\rangle\|_{\psi_2}<\infty.$$
We say that $X$ is \emph{isotropic} if
$$\E\langle X,x\rangle^2 = \|x\|_2^2, \qquad \mathrm{for \ all} \ x \in \R^n.$$
If $T$ is a set, then $d:T\ti T\rightarrow \R_+$ is called a \emph{semi-metric} on $T$ if $d(x,y)=d(y,x)$ and $d(x,z) \leq d(x,y) + d(y,z)$ for all $x,y,z\in T$. If $S\subset T$, then we use
$$\Del_d(S) = \sup_{s,t \in S} d(s,t)$$
to denote its \emph{diameter}.\par
We conclude by fixing some notation. We use $\|\cdot\|_2$ to denote the Euclidean norm on $\R^n$ and let $d_2$ denote the associated Euclidean metric. If $H$ is a Hilbert space then $\langle\cdot,\cdot\rangle$ denotes the inner product on $H$, $\|\cdot\|_H$ the induced inner product and $d_H$ the induced metric on $H$. If $T\subset H$ we use $\Del_H(T)$ to denote its diameter. If $A:H_1\rightarrow H_2$ is a bounded linear operator between two Hilbert spaces then $\|A\|$ denotes its operator norm. If $S$ is a set then we let $|S|$ denote its cardinality. Given $0<\al<\infty$ we write $\log^{\al}x:=(\log(x))^{\al}$ and $\log_+(x) := \max\{\log(x),0\}$ for brevity. Finally, we write $A\lesssim B$ if $A\leq C B$ for some universal constant $C>0$ and write $A\simeq B$ if both $A\lesssim B$ and $A\gtrsim B$ hold.

\section{$\ga_2$-functional, Gaussian width and entropy}
\label{sec:gamma2}

In this section we discuss the $\ga_2$-functional of a semi-metric space $(T,d)$, which plays a central role in the formulation of Theorem~\ref{thm:KappaGeneral}. Intuitively, one should think of $\ga_2(T,d)$ as measuring the complexity of $(T,d)$.
\begin{definition}
\label{def:gamma2functional}
Let $(T,d)$ be a semi-metric space. A sequence $\cT=(T_n)_{n\geq 0}$ of $T$ is called \emph{admissible} if $|T_0|=1$ and $|T_n|\leq 2^{2^n}$. The \emph{$\ga_2$-functional} of $(T,d)$ is defined by
$$\ga_2(T,d) = \inf_{(\cT,\pi)}\sup_{t\in T}\sum_{n\geq 0} 2^{n/2}d(t,\pi_n(t)),$$
where the infimum is taken over all admissible sequences $\cT=(T_n)_{n\geq 0}$ in $T$ and all sequences $\pi=(\pi_n)_{n\geq 0}$ of maps $\pi_n:T\rightarrow T_n$.
\end{definition}
In the literature it is common to take $\pi_n(t):=\argmin_{s\in T_n}d(t,s)$ in Definition~\ref{def:gamma2functional}, i.e., to define the $\ga_2$-functional as
$$\ga_2(T,d) = \inf_{\cT}\sup_{t\in T}\sum_{n\geq 0} 2^{n/2}d(t,T_n),$$
where $d(t,T_n)=\inf_{s\in T_n} d(t,s)$. Our slightly relaxed definition will be convenient later on.\par
Let us recall the role of the $\ga_2$-functional in the theory of generic chaining. We recall two results from this theory. Suppose that $(X_t)_{t\in T}$ is a real-valued stochastic process, which has subgaussian increments with respect to a semi-metric $d$. That is, for all $s,t\in T$,
\begin{equation}
\label{eqn:subgProc}
\bP(|X_t - X_s|\geq ud(t,s)) \leq 2\exp(-u^2) \qquad (u\geq 0).
\end{equation}
Talagrand's generic chaining method \cite{Tal05} yields
$$\E\sup_{t\in T}|X_t| \lesssim \ga_2(T,d).$$
This bound is known to be sharp in the following interesting special case. Suppose that $(G_t)_{t\in T}$ is a centered Gaussian process and let $d_{\mathrm{can}}(s,t)=(\E|G_s-G_t|^2)^{1/2}$ be the induced canonical metric on $T$. Then, Talagrand's celebrated majorizing measures theorem \cite{Tal87,Tal01} states that
\begin{equation}
\label{eqn:majMeas}
\E\sup_{t\in T}|G_t| \simeq \ga_2(T,d_{\mathrm{can}}).
\end{equation}
Let $g=(g_1,\ldots,g_n)$ be a vector consisting of independent standard Gaussian variables and for any $x\in \R^n$ define $G_x = \langle g,x\rangle$. Then $(G_x)_{x\in T}$ is a centered Gaussian process for any given subset $T$ of $\R^n$. Note that the canonical metric $d_{\mathrm{can}}$ coincides with the usual Euclidean metric $d_2$ in this case. Hence, (\ref{eqn:majMeas}) translates into
\begin{equation}
\label{eqn:gaussWidth}
\ga_2(T,d_2) \simeq \E\sup_{x\in T} |\langle g,x\rangle|.
\end{equation}
The quantity on the right hand side is known as the \emph{Gaussian width} of the set $T$.\par
The $\ga_2$-functional can be estimated using covering numbers. For any given $u>0$ let $N(T,d,u)$ denote the covering number of $T$, i.e., the smallest number of balls of radius $u$ in $(T,d)$ needed to cover $T$. Then $\log N(T,d,u)$ is called the \emph{$u$-entropy} of $(T,d)$. Let
$$I_2(T,d) = \int_0^{\Del_d(T)} \log^{1/2} N(T,d,u) \ du$$
be the associated entropy integral. It is shown in \cite[Section 1.2]{Tal05} that
\begin{equation}
\label{eqn:gammaFunEstEntInt}
\ga_{2}(T,d) \lesssim I_2(T,d).
\end{equation}
and in particular $\ga_2(T,d)\lesssim \Del_d(T)\log^{1/2}|T|$ if $T$ is finite. The reverse estimate of (\ref{eqn:gammaFunEstEntInt}) fails \cite[Section 2.1]{Tal05}. However, if $T\subset \R^n$ then one can show that
$$I_2(T,d_2) \lesssim (\log n)\ga_2(T,d_2).$$
This worst-case bound is attained by natural objects, such as ellipsoids \cite[Section 2.2]{Tal05}. Even though (\ref{eqn:gammaFunEstEntInt}) is not sharp, it is a very important tool to estimate the $\ga_2$-functional in practical situations and we will use it several times below.\par
For our analysis we use the following tail bound for suprema of empirical processes from \cite[Theorem 5.5]{Dir13}. This result improves and extends two earlier results in the same direction of Klartag and Mendelson \cite{KlM05} and Mendelson, Pajor and Tomczak-Jaegermann \cite{MPT07}, see \cite{Dir13} for a detailed comparison.
\begin{theorem}
\label{thm:supAverages}
Fix a probability space $(\Om_i,\cF_i,\bP_i)$ for every $1\leq i\leq m$. For every $t\in T$ and $1\leq i\leq m$ let $X_{t,i}\in L^2(\Om_i)$. Define the process of averages
\begin{equation}
\label{eqn:AtDef}
A_t = \frac{1}{m} \sum_{i=1}^m X_{t,i}^2 - \E X_{t,i}^2.
\end{equation}
Consider the semi-metric
$$d_{\psi_2}(s,t) = \max_{1\leq i\leq m}\|X_{s,i}-X_{t,i}\|_{\psi_2} \qquad (s,t\in T)$$
and define the radius
$$\DelO_{\psi_2}(T) = \sup_{t\in T} \max_{1\leq i\leq m} \|X_{t,i}\|_{\psi_2}.$$
There exist constants $c,C>0$ such that for any $u\geq 1$,
\begin{align*}
\bP\Big(\sup_{t\in T}|A_t| & \geq C\Big(\frac{1}{m}\ga_2^2(T,d_{\psi_2}) + \frac{1}{\sqrt{m}}\DelO_{\psi_2}(T)\ga_2(T,d_{\psi_2})\Big) \\
& \qquad \qquad \qquad \qquad + c\Big(\sqrt{u}\frac{\DelO_{\psi_2}^2(T)}{\sqrt{m}} + u\frac{\DelO_{\psi_2}^2(T)}{m}\Big)\Big)\leq e^{-u}.
\end{align*}
\end{theorem}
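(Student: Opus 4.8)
The plan is to run the generic chaining method directly on the process $(A_t)_{t\in T}$. The starting observation is that, although $A_t$ does not have subgaussian increments, its increments are of Bernstein type: since
\[ A_s - A_t = \frac1m\sum_{i=1}^m\Big[(X_{s,i}-X_{t,i})(X_{s,i}+X_{t,i}) - \E\big[(X_{s,i}-X_{t,i})(X_{s,i}+X_{t,i})\big]\Big], \]
each summand is centered with $\psi_1$-norm at most $\|X_{s,i}-X_{t,i}\|_{\psi_2}\,\|X_{s,i}+X_{t,i}\|_{\psi_2}\lesssim \DelO_{\psi_2}(T)\,d_{\psi_2}(s,t)$, so Bernstein's inequality for sums of independent centered subexponential variables yields
\[ \bP\Big(|A_s - A_t|\geq \sqrt v\,\tfrac{\DelO_{\psi_2}(T)}{\sqrt m}\,d_{\psi_2}(s,t) + v\,\tfrac{\DelO_{\psi_2}(T)}{m}\,d_{\psi_2}(s,t)\Big)\leq 2e^{-v}\qquad (v\geq 0). \]
The first ingredient I would assemble is therefore a generic chaining tail bound for processes whose increments satisfy such a mixed subgaussian--subexponential estimate relative to two semi-metrics; this is the main result of \cite[Section~3]{Dir13}, proved by the usual chaining scheme applied to an admissible sequence, splitting each link according to the two terms in the increment bound.

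Applying such a bound verbatim would, however, produce a term $\tfrac1m\DelO_{\psi_2}(T)\,\ga_1(T,d_{\psi_2})$ coming from the subexponential part, where $\ga_1$ is the functional obtained by replacing $2^{n/2}$ by $2^n$ in Definition~\ref{def:gamma2functional}. This is much larger than the claimed $\tfrac1m\ga_2^2(T,d_{\psi_2})$ and may even be infinite, so removing it is the heart of the argument. The key is to chain $A_t$ not along an arbitrary admissible sequence but along one that is nearly optimal for $\ga_2(T,d_{\psi_2})$, so that the consecutive links $\delta_n(t):=d_{\psi_2}(\pi_n(t),\pi_{n-1}(t))$ obey $\sum_{n\geq1}2^{n/2}\delta_n(t)\lesssim\ga_2(T,d_{\psi_2})$ for every $t$, and to split each increment along the chain as
\[ A_{\pi_n(t)}-A_{\pi_{n-1}(t)} = \frac1m\sum_i\Big[\big(X_{\pi_n(t),i}-X_{\pi_{n-1}(t),i}\big)^2-\E(\cdots)\Big] + \frac2m\sum_i\Big[X_{\pi_{n-1}(t),i}\big(X_{\pi_n(t),i}-X_{\pi_{n-1}(t),i}\big)-\E(\cdots)\Big], \]
that is, into a ``quadratic'' (diagonal) part and a ``linear'' (cross) part, which are handled differently.

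For the diagonal part the scale-$n$ summands have $\psi_1$-norm at most $\delta_n(t)^2$, so a scale-by-scale Bernstein estimate with a union bound over the at most $2^{2^{n+1}}$ links available at level $n$ contributes a subgaussian term $\tfrac1{\sqrt m}\sum_n 2^{n/2}\delta_n(t)^2\leq\tfrac1{\sqrt m}\Del_{d_{\psi_2}}(T)\sum_n 2^{n/2}\delta_n(t)\lesssim\tfrac1{\sqrt m}\DelO_{\psi_2}(T)\ga_2(T,d_{\psi_2})$ together with a subexponential term $\tfrac1m\sum_n 2^n\delta_n(t)^2=\tfrac1m\sum_n\big(2^{n/2}\delta_n(t)\big)^2\leq\tfrac1m\big(\sum_n 2^{n/2}\delta_n(t)\big)^2\lesssim\tfrac1m\ga_2^2(T,d_{\psi_2})$; the elementary inequality $\sum_n a_n^2\leq(\sum_n a_n)^2$ for $a_n\geq0$ is exactly what turns the would-be $\ga_1$-functional into $\ga_2^2$. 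For the linear part the subexponential contribution carries the additional factor $\DelO_{\psi_2}(T)$ and does not decay fast enough across scales, so I would instead sum the linear parts over $n$ into $\tfrac2m\sum_i[X_{t_0,i}(X_{t,i}-X_{t_0,i})-\E(\cdots)]$ and, after a decoupling inequality, condition on the block $(X_{t_0,i})_{i=1}^m$: conditionally this is a subgaussian process in $t$ with increments of order $m^{-1}\|(X_{t_0,i})_i\|_2\,d_{\psi_2}(s,t)$, to which the generic chaining bound of Section~\ref{sec:gamma2} applies \emph{without} an $\ga_1$-term, giving $m^{-1}\|(X_{t_0,i})_i\|_2\,\ga_2(T,d_{\psi_2})$; since $\|(X_{t_0,i})_i\|_2^2=\sum_i X_{t_0,i}^2$ is a sum of independent nonnegative subexponential variables with $\E\sum_i X_{t_0,i}^2\lesssim m\,\DelO_{\psi_2}^2(T)$, this again produces $m^{-1/2}\DelO_{\psi_2}(T)\ga_2(T,d_{\psi_2})$ on the good event, up to lower-order fluctuations. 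Finally, the deviation terms $\sqrt u\,\tfrac{\DelO_{\psi_2}^2(T)}{\sqrt m}+u\,\tfrac{\DelO_{\psi_2}^2(T)}{m}$ collect the $u$-dependent parts of all of the above together with the single remaining term $A_{t_0}=\tfrac1m\sum_i(X_{t_0,i}^2-\E X_{t_0,i}^2)$, estimated by Bernstein with $\|X_{t_0,i}^2\|_{\psi_1}\leq\DelO_{\psi_2}^2(T)$, using $\Del_{d_{\psi_2}}(T)\lesssim\DelO_{\psi_2}(T)$ and absorbing cross-terms via $2xy\leq x^2+y^2$.

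I expect two steps to demand the most care. The first is the bookkeeping that collapses the subexponential contributions to $\tfrac1m\ga_2^2(T,d_{\psi_2})$: this genuinely requires chaining along a near-optimal $\ga_2$-sequence (an arbitrary admissible sequence leaks an $\ga_1$-functional), and it is also where one must control how much of the deviation parameter $u$ is spent at each scale so that no spurious logarithmic factors enter the deviation terms --- this delicacy is precisely the content of the general chaining theorem in \cite{Dir13}. The second is the decoupling-and-conditioning step for the linear part, which must be arranged so that the randomness of $\|(X_{t_0,i})_i\|_2$ contributes only fluctuations absorbable into the stated deviation. The remaining point, passing from control of the increments $A_s-A_t$ to control of $\sup_{t\in T}|A_t|$, is routine: one reduces to a countable dense subset (or a separable version of the process) and observes that on the good event the chained partial sums converge.
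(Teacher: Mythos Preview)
The paper does not give its own proof of this theorem: it is quoted verbatim from \cite[Theorem 5.5]{Dir13} and used as a black box in the proof of Theorem~\ref{thm:KappaGeneral}. So there is no ``paper's proof'' to compare your proposal against; what can be assessed is whether your sketch is a plausible route to the result of \cite{Dir13}.

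Your overall plan is in the right spirit and you correctly isolate the crux of the argument: a naive mixed-tail chaining produces a $\tfrac{\bar\Delta_{\psi_2}(T)}{m}\gamma_1(T,d_{\psi_2})$ contribution from the subexponential regime, and the whole point is to replace it by $\tfrac{1}{m}\gamma_2^2(T,d_{\psi_2})$. Your treatment of the diagonal pieces is also correct: along a near-optimal $\gamma_2$-sequence the diagonal summands have $\psi_1$-norm $\delta_n(t)^2$, and the identity $\sum_n(2^{n/2}\delta_n)^2\le(\sum_n 2^{n/2}\delta_n)^2$ is exactly what converts the would-be $\gamma_1$-sum into $\gamma_2^2$.

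There is, however, a genuine gap in your handling of the cross terms. You claim that the level-wise linear pieces
\[
C_n=\frac{2}{m}\sum_i\Big[X_{\pi_{n-1}(t),i}\big(X_{\pi_n(t),i}-X_{\pi_{n-1}(t),i}\big)-\E(\cdots)\Big]
\]
``sum over $n$'' into $\tfrac{2}{m}\sum_i[X_{t_0,i}(X_{t,i}-X_{t_0,i})-\E(\cdots)]$. They do not: writing $a_n=X_{\pi_n(t),i}$, one has $\sum_n a_{n-1}(a_n-a_{n-1})=a_0(a_\infty-a_0)+\sum_n(a_{n-1}-a_0)(a_n-a_{n-1})$, and the extra sum on the right is not zero. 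Its summands have $\psi_1$-norm of order $\bar\Delta_{\psi_2}(T)\,\delta_n(t)$, so chaining them directly reproduces precisely the unwanted $\tfrac{\bar\Delta_{\psi_2}(T)}{m}\gamma_1$ term you set out to avoid. In other words, the decomposition you wrote pushes the entire difficulty into the cross part, and your proposed telescoping does not dispose of it. (A related issue: even for the global linear term $\tfrac{2}{m}\sum_i X_{t_0,i}(X_{t,i}-X_{t_0,i})$, the ``decouple and condition on $(X_{t_0,i})_i$'' step is not routine, since $X_{t_0,i}$ and $X_{t,i}$ live on the same $\Omega_i$ and are not independent; a decoupling inequality for diagonal-free chaos does not apply directly here.)

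The actual argument in \cite{Dir13} avoids this trap by proving a general chaining tail bound for processes with mixed subgaussian--subexponential increments in which the bookkeeping of how much of the parameter $u$ is spent at each scale is done once and for all, and then exploiting the quadratic structure so that the subexponential metric enters effectively as $d_{\psi_2}^2$ rather than $\bar\Delta_{\psi_2}(T)\,d_{\psi_2}$; this is what makes $\gamma_1$ collapse to $\gamma_2^2$ without a leftover cross term. Your diagonal computation captures this mechanism, but to make the argument go through you need to organise the chaining so that no ``linear'' residue of order $\bar\Delta_{\psi_2}(T)\,\delta_n$ survives --- your current split does not achieve this.
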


\section{Master bound}
\label{sec:masterBound}

Throughout, let $H$ be a real Hilbert space. Let $\cP$ be a set of points in $H$. We are interested in reducing the dimensionality of $\cP$, meaning that we would like to construct a map $\Phi:H\rightarrow\R^m$ with the \emph{target dimension} $m$ as small as possible. We call the dimension of $H$ the \emph{original dimension}, which we think of as being very large or even infinite. All the results in this paper provide a lower bound on $m$ under which $\cP$ can be mapped into $\R^m$, while preserving certain properties of the set $\cP$. The following definition expresses that $\Phi$ approximately preserves the \emph{size} of the original vectors.
\begin{definition}
Let $\cP$ be a set in $H$ and let $\Phi:H\rightarrow \R^m$. The \emph{restricted isometry constant} $\del_{\cP,\Phi}$ of $\Phi$ on $\cP$ is the least possible constant $0\leq \del\leq\infty$ such that
$$(1-\del)\|x\|_H^2 \leq \|\Phi(x)\|_2^2 \leq (1+\del)\|x\|_{H}^2.$$
\end{definition}
It is common parlance to loosely say that a map $\Phi$ satisfies \emph{the restricted isometry property} on $\cP$ if $\del_{\cP,\Phi}<\del_*$, where $\del_*<1$ is some small value.\par
The following definition expresses that $\Phi$ preserves \emph{pairwise distances} between elements of $\cP$. We distinguish between a \emph{multiplicative} and an \emph{additive} error.
\begin{definition}
\label{def:pairwiseDist}
Let $\cP$ be a set in $H$ and let $\Phi:H\rightarrow \R^m$. We say that $\Phi$ preserves distances on $\cP$ with \emph{multiplicative} error $0<\eps<1$ if
\begin{equation}
\label{eqn:defMulError}
(1-\eps)\|x-y\|_{H}^2 \leq \|\Phi(x)-\Phi(y)\|_2^2 \leq (1+\eps)\|x-y\|_{H}^2 \qquad \mathrm{for \ all} \ x,y\in\cP.
\end{equation}
The least possible constant $\eps_{\cP,\Phi}$ for which this holds is called the \emph{multiplicative precision} of $\Phi$. We say that $\Phi$ preserves distances on $\cP$ with \emph{additive} error $0\leq \zeta<1$ if
\begin{equation}
\label{eqn:defAddError}
\|x-y\|_{H}^2 - \zeta \leq \|\Phi(x)-\Phi(y)\|_2^2 \leq \|x-y\|_{H}^2 + \zeta \qquad \mathrm{for \ all} \ x,y\in\cP.
\end{equation}
The least possible constant $\zeta_{\cP,\Phi}$ for which this holds is called the \emph{additive precision} of $\Phi$.
\end{definition}
The restricted isometry constant and the multiplicative error in Definition~\ref{def:pairwiseDist} are closely related. If
\begin{equation}
\label{eqn:defChords}
\cP_c=\{x-y \ : \ x,y\in \cP\}
\end{equation}
denotes the set of chords associated with $\cP$, then $\eps_{\cP,\Phi} = \del_{\cP_c,\Phi}$.\par
Maps $\Phi$ that preserve pairwise distances up to a multiplicative error $\eps$ are the most interesting for applications as they often preserve additional properties of $\cP$. For instance, in applications it is often used that if $\Phi$ is in addition linear and $-\cP=\cP$ then inner products are preserved up to additive error, i.e.,
$$|\langle\Phi(x),\Phi(y)\rangle - \langle x,y\rangle|\leq \eps$$
for all unit vectors $x,y \in \cP$. This can be readily shown using a polarization argument. More can be said if $\cP$ is a manifold, see Section~\ref{sec:manifolds} below.
\begin{remark}
\label{rem:noSquaresMult}
In parts of the literature, it is customary to say that a map $\Phi$ approximately preserves distances on $\cP$ with multiplicative error $0<\hat{\eps}<1$ if
\begin{equation}
\label{eqn:defMulErrorNoSq}
(1-\hat{\eps})\|x-y\|_{H} \leq \|\Phi(x)-\Phi(y)\|_2 \leq (1+\hat{\eps})\|x-y\|_{H} \qquad \mathrm{for \ all} \ x,y\in\cP.
\end{equation}
That is, one leaves out the squares in (\ref{eqn:defMulError}). Let $\hat{\eps}_{\cP,\Phi}$ be the smallest possible $\hat{\eps}$ in (\ref{eqn:defMulErrorNoSq}). One readily checks that $\hat{\eps}_{\cP,\Phi}\leq \hat{\eps}$ if $\eps_{\cP,\Phi}\leq 2\hat{\eps}-\hat{\eps}^2$.
\end{remark}
Below we will derive various dimensionality reduction results for the following class of random maps.
\begin{definition}[Subgaussian map]
\label{def:subgaussianMap}
Let $\cS$ be any set of points in $H$. For every $1\leq i\leq m$ let $(\Om_i,\cF_i,\bP_i)$ be a probability space. Let $\Om$ be the corresponding product probability space. We call $\Phi:\Om\ti H\rightarrow \R^m$ a linear, isotropic, subgaussian map, or briefly a \emph{subgaussian map on $\cS$} if the following conditions hold.
\begin{enumerate}
\item (Linearity) For any $\om \in \Om$ the map $\Phi(\om):H\rightarrow \R^m$ is linear;
\item (Independence) For all $x\in \cS$ and $1\leq i\leq m$, $[\Phi(x)]_i \in L^2(\Om_i)$;
\item (Isotropy) For any $x\in \cS$ we have $\E\|\Phi(x)\|_2^2 = \|x\|_{H}^2$;
\item (Subgaussianity) There is an $\al\geq 1$ such that for all $x,y \in \cS\cup\{0\}$,
$$\max_{1\leq i\leq m} \|[\Phi(x) - \Phi(y)]_i\|_{\psi_2} \leq \sqrt{\frac{\al}{m}} \|x-y\|_{H}.$$
\end{enumerate}
\end{definition}
Note that the condition $\al\geq 1$ is forced by the assumption that $\Phi$ is isotropic.
\begin{example}[Subgaussian matrices]
Suppose that $H=\R^n$ for some $n\in \N$, equipped with the Euclidean norm. Let $\tilde{\Phi}$ be an $m\ti n$ random matrix, whose rows $\tilde{\Phi}_1,\ldots,\tilde{\Phi}_m$ are independent, mean-zero, isotropic, subgaussian random vectors in $\R^n$. By setting $\Phi=\frac{1}{\sqrt{m}}\tilde{\Phi}$ we obtain a subgaussian map on $\R^n$.
\end{example}
\begin{example}[Database-friendly maps \cite{Ach03}]
\label{exa:Achlioptas}
As a particular instance of the previous example, we can let $\tilde{\Phi}$ be any random matrix filled with independent, mean-zero, unit variance, subgaussian (in particular, bounded) entries $\Phi_{ij}$. In \cite{Ach03}, Achlioptas proposed to take independent random variables satisfying
$$\bP(\Phi_{ij} = -\sqrt{3}) = \frac{1}{6}, \qquad \bP(\Phi_{ij} = 0) = \frac{2}{3}, \qquad \bP(\Phi_{ij} = \sqrt{3}) = \frac{1}{6}.$$
Due to the (expected) large number of zeroes occurring in $\Phi$, this map requires less storage space (or, as it is phrased in \cite{Ach03}, it is `database-friendly') and allows for faster matrix-vector multiplication than a densely populated matrix \cite[Section 7]{Ach03}. More generally, for any $q\geq 1$ one can take
$$\bP(\Phi_{ij} = -\sqrt{q}) = \frac{1}{2q}, \qquad \bP(\Phi_{ij} = 0) = \frac{q-1}{q}, \qquad \bP(\Phi_{ij} = \sqrt{q}) = \frac{1}{2q}.$$
One can readily compute that the subgaussian parameter $\al$ in part (d) of Definition~\ref{def:subgaussianMap} is bounded by $q$ in this case.
\end{example}
\begin{example}[Infinite-dimensional Hilbert spaces]
Suppose that $H$ is a separable Hilbert space. Let $(x_j)_{j\geq 1}$ be any orthonormal basis of $H$ and, for $1\leq i\leq m$, let $(g_j^{(i)})_{j\geq 1}$ be independent sequences of i.i.d.\ standard Gaussian random variables. For $1\leq i\leq m$ we define  $\tilde{\Phi}_i:H\rightarrow L^2(\Om)$ by
$$\tilde{\Phi}_i x = \sum_{j=1}^{\infty} g_j^{(i)}\langle x,x_j\rangle.$$
The map $\Phi:\Om\ti H\rightarrow \R^m$ defined by $\Phi x=\frac{1}{\sqrt{m}}(\tilde{\Phi}_1 x,\ldots,\tilde{\Phi}_m x)$ is subgaussian on $H$.
\end{example}
To give a unified presentation of dimensionality reduction results for operators which are size and pairwise distance preserving, we define for a given set $\cS$ in $H$ the constant $0\leq\ka_{\cS,\Phi}\leq\infty$ by
$$\ka_{\cS,\Phi} = \sup_{x\in \cS} \Big|\|\Phi(x)\|_2^2 - \|x\|_H^2\Big|.$$
The restricted isometry constant of $\Phi$ on $\cP$ is exactly $\ka_{\cP_{nv},\Phi}$, where
\begin{equation}
\label{eqn:RIconstKappa}
\cP_{nv}=\Big\{\frac{x}{\|x\|_H} \ : \ x\in \cP\Big\}
\end{equation}
is the set of normalized vectors in $\cP$. For a pairwise distance preserving operator $\Phi$ the multiplicative error on $\cP$ is equal to $\ka_{\cP_{nc},\Phi}$, where
$$\cP_{nc} = \Big\{\frac{x-y}{\|x-y\|_H} \ : \ x,y\in\cP\Big\}$$
is the set of normalized chords corresponding to $\cP$. The additive error on $\cP$ is equal to $\ka_{\cP_c,\Phi}$, with $\cP_c$ as in (\ref{eqn:defChords}).\par
For our treatment in Section~\ref{sec:unionsubspaces} it will be convenient to consider the situation where $\cS$ is described by a parameter set $\Xi$. We will say that $\xi:\Xi\rightarrow \cS$ is a \emph{parametrization of $\cS$} if $\xi$ is a surjective map. To any parametrization $\xi$ we associate a semi-metric $d_{\xi}$ on $\Xi$ defined by
\begin{equation}
\label{eqn:dxiDef}
d_{\xi}(x,y):=\|\xi(x)-\xi(y)\|_{H} \qquad (x,y\in \Xi).
\end{equation}
We can now state a `master bound'. Every dimensionality reduction result stated below is a corollary of this theorem.
\begin{theorem}
\label{thm:KappaGeneral}
Let $\cS$ be a set of points in $H$ with radius $\DelO_H(\cS)=\sup_{y\in \cS} \|y\|_H$ and let $\xi:\Xi\rightarrow \cS$ be a parametrization of $\cS$. Let $\Phi:\Om\ti H\rightarrow \R^m$ be a subgaussian map on $\cS$. There is a constant $C>0$ such that for any $0<\ka,\eta<1$ we have $\bP(\ka_{\cS,\Phi}\geq \ka)\leq\eta$ provided that
\begin{equation}
\label{eqn:KappageneralMeasPar}
m \geq C\al^2\ka^{-2}\DelO_H^2(\cS) \ \max\{\ga_2^2(\Xi,d_{\xi}), \DelO_H^2(\cS)\log(\eta^{-1})\}.
\end{equation}
\end{theorem}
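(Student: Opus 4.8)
The plan is to deduce the estimate from the tail bound for suprema of empirical processes in Theorem~\ref{thm:supAverages}. I would apply that theorem on the $m$ coordinate probability spaces $(\Om_i,\cF_i,\bP_i)$ coming from the subgaussian map, with index set $T:=\Xi$ and with
\[
X_{t,i} := \sqrt{m}\,[\Phi(\xi(t))]_i \qquad (t\in\Xi,\ 1\le i\le m).
\]
The rescaling by $\sqrt{m}$ is chosen so that $\sup_{t\in\Xi}|A_t|$ comes out equal to $\ka_{\cS,\Phi}$ rather than $m^{-1}\ka_{\cS,\Phi}$. First I would verify the hypotheses of Theorem~\ref{thm:supAverages}: each $X_{t,i}$ lies in $L^2(\Om_i)$ since $\xi(t)\in\cS$, by the Independence condition of Definition~\ref{def:subgaussianMap}. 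Next I would compute $A_t$. Because $[\Phi(x)]_i$ depends only on $\om_i$, the Isotropy condition gives $\sum_{i=1}^m\E X_{t,i}^2 = m\,\E\|\Phi(\xi(t))\|_2^2 = m\|\xi(t)\|_H^2$, whence
\[
A_t = \frac1m\sum_{i=1}^m\bigl(X_{t,i}^2 - \E X_{t,i}^2\bigr) = \|\Phi(\xi(t))\|_2^2 - \|\xi(t)\|_H^2 ,
\]
and since $\xi$ is a parametrization of $\cS$ (hence surjective), $\sup_{t\in\Xi}|A_t| = \ka_{\cS,\Phi}$.

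The second step is to bound, for this process, the radius $\DelO_{\psi_2}(\Xi)$ and the functional $\ga_2(\Xi,d_{\psi_2})$ that appear in Theorem~\ref{thm:supAverages}. Using that $\Phi(\om)$ is linear — so $\Phi(\om)0=0$ — and applying the Subgaussianity condition with the second argument $0$ yields $\|X_{t,i}\|_{\psi_2}=\sqrt{m}\,\|[\Phi(\xi(t))]_i\|_{\psi_2}\le\sqrt{\al}\,\|\xi(t)\|_H\le\sqrt{\al}\,\DelO_H(\cS)$, so $\DelO_{\psi_2}(\Xi)\le\sqrt{\al}\,\DelO_H(\cS)$. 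Applying the Subgaussianity condition to $\xi(s),\xi(t)\in\cS$ gives $d_{\psi_2}(s,t)=\sqrt{m}\max_{1\le i\le m}\|[\Phi(\xi(s))-\Phi(\xi(t))]_i\|_{\psi_2}\le\sqrt{\al}\,\|\xi(s)-\xi(t)\|_H=\sqrt{\al}\,d_\xi(s,t)$; since the $\ga_2$-functional is monotone and positively homogeneous in its semi-metric (immediate from Definition~\ref{def:gamma2functional}), we get $\ga_2(\Xi,d_{\psi_2})\le\sqrt{\al}\,\ga_2(\Xi,d_\xi)$.

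Now I would choose $u:=\log(\eta^{-1})$, assuming without loss of generality that $u\ge1$ (if $\eta>e^{-1}$, take $u=1$ instead; this only lowers the threshold below and already yields a probability at most $\eta$), so that $e^{-u}\le\eta$. Inserting the two bounds above into the probability bound of Theorem~\ref{thm:supAverages}, the threshold there is at most a universal constant multiple of
\[
\al\Bigl(\frac1m\ga_2^2(\Xi,d_\xi)+\frac1{\sqrt m}\DelO_H(\cS)\ga_2(\Xi,d_\xi)+\frac{\sqrt u}{\sqrt m}\DelO_H^2(\cS)+\frac um\DelO_H^2(\cS)\Bigr).
\]
Finally I would verify that, with the constant $C$ in (\ref{eqn:KappageneralMeasPar}) taken large enough, the assumption on $m$ forces this quantity to be at most $\ka$: the two summands of order $m^{-1/2}$ are controlled directly by the two entries of the maximum in (\ref{eqn:KappageneralMeasPar}) (the one containing $\ga_2^2(\Xi,d_\xi)$ and the one containing $\DelO_H^2(\cS)\log(\eta^{-1})$, respectively), while the two summands of order $m^{-1}$ decay faster and are of lower order under the same condition. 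Theorem~\ref{thm:supAverages} then delivers $\bP(\ka_{\cS,\Phi}\ge\ka)=\bP(\sup_{t\in\Xi}|A_t|\ge\ka)\le e^{-u}\le\eta$, which is the claim.

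The one step requiring care is the last one: one must check that the precise form of (\ref{eqn:KappageneralMeasPar}) — the prefactor $\DelO_H^2(\cS)$ multiplied by the maximum of the two terms — is exactly what is needed to absorb the two slower-decaying ($m^{-1/2}$) contributions to the threshold of Theorem~\ref{thm:supAverages}, and that the $m^{-1}$ contributions are then automatically negligible. The remaining ingredients — the verification of the hypotheses of Theorem~\ref{thm:supAverages}, the identity for $A_t$, and the estimates for $\DelO_{\psi_2}(\Xi)$ and $\ga_2(\Xi,d_{\psi_2})$ via Definition~\ref{def:subgaussianMap} and the scaling of the $\ga_2$-functional — are all routine.
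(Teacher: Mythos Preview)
Your proposal is correct and follows essentially the same route as the paper: apply Theorem~\ref{thm:supAverages} with $T=\Xi$ and $X_{t,i}=\sqrt{m}\,[\Phi(\xi(t))]_i$, identify $\sup_t|A_t|$ with $\ka_{\cS,\Phi}$ via isotropy and surjectivity, bound $d_{\psi_2}\le\sqrt{\al}\,d_\xi$ and $\DelO_{\psi_2}(\Xi)\le\sqrt{\al}\,\DelO_H(\cS)$ via the subgaussianity condition, and then read off the lower bound on $m$. Your write-up is in fact slightly more careful than the paper's (you flag the surjectivity of $\xi$, the homogeneity of $\ga_2$, and the case $\log(\eta^{-1})<1$), but the argument is the same.
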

\begin{proof}
For any $x\in \Xi$ we write $\Phi_i(x):=\sqrt{m}[\Phi(\xi(x))]_i$. By isotropy of $\Phi$,
$$\|\Phi(x)\|_2^2 - \|x\|_{H}^2 = \|\Phi(x)\|_2^2 - \E\|\Phi(x)\|_2^2 = \frac{1}{m}\sum_{i=1}^m \Phi_i(x)^2 - \E\Phi_i(x)^2.$$
We can now set $T=\Xi$ and $X_{t,i} = \Phi_i(x)$ in Theorem~\ref{thm:supAverages} to obtain for any $u\geq 1$
$$\bP\Big(\ka_{\cS,\Phi}\geq C\Big(\frac{\ga_2^2(\Xi,d_{\psi_2})}{m} + \frac{\ga_2(\Xi,d_{\psi_2})\DelO_{\psi_2}(\Xi)}{\sqrt{m}} + \sqrt{u}\frac{\DelO_{\psi_2}^2(\Xi)}{\sqrt{m}} + u\frac{\DelO_{\psi_2}^2(\Xi)}{m}\Big)\Big) \leq e^{-u}.$$
Since $\Phi$ is subgaussian we have for any $x,y\in \Xi$,
\begin{align*}
d_{\psi_2}(x,y) & = \max_{1\leq i\leq m} \|\Phi_i(x) - \Phi_i(y)\|_{\psi_2} \\
& = \sqrt{m}\max_{1\leq i\leq m} \|[\Phi(\xi(x))-\Phi(\xi(y))]_i\|_{\psi_2} \leq \sqrt{\al} \|\xi(x)-\xi(y)\|_{H} = \sqrt{\al} d_{\xi}(x,y)
\end{align*}
and similarly, for any $x\in \Xi$,
$$\max_{1\leq i\leq m} \|\Phi_i(\xi(x))\|_{\psi_2} \leq \sqrt{\al} \|\xi(x)\|_{H}\leq \sqrt{\al}\DelO_H(\cS).$$
In particular,
$$\ga_2(\Xi,d_{\psi_2}) \leq \sqrt{\al} \ga_2(\Xi,d_{\xi}), \qquad \DelO_{\psi_2}(\Xi) \leq \sqrt{\al}\DelO_H(\cS).$$
We conclude that $\bP(\ka_{\cP,\Phi}\geq \ka)\leq \eta$ if (\ref{eqn:KappageneralMeasPar}) holds.
\end{proof}
In the special case that $\cS$ is a subset of the unit sphere of $H$ and $\xi$ is the trivial parametrization, Theorem~\ref{thm:KappaGeneral} corresponds to a result of Mendelson, Pajor and Tomczak-Jaegermann \cite[Corollary 2.7]{MPT07}, in which the $\ga_2$-functional in (\ref{eqn:KappageneralMeasPar}) is replaced by the Gaussian width of $\cS$ (this is equivalent by (\ref{eqn:gaussWidth}) in this case). They refined important earlier work of Klartag and Mendelson \cite{KlM05}, who proved the same result with a suboptimal dependence in $\eta$. The result in \cite{MPT07} was obtained much earlier for Gaussian matrices by Gordon \cite{Gor88}. The proof given in \cite{MPT07} makes specific use of the assumption that $\cS$ is contained in the unit sphere and cannot be easily modified to cover the general case considered here.
\begin{remark}[Anisotropy]
One can relax the assumption that the subgaussian map $\Phi$ is isotropic on the set $\cS$. Suppose that $\Phi$ satisfies (a), (b) and (d) in Definition~\ref{def:subgaussianMap}. Set $\Psi=\E\Phi^*\Phi$, then
$$\E\|\Phi x\|_2^2 = x^*\E(\Phi^*\Phi)x = x^*\Psi x = \|\Psi^{1/2}x\|_H^2.$$
The proof of Theorem~\ref{thm:KappaGeneral} shows that for any $0<\ka,\eta<1$ we have
$$-\ka + \|\Psi^{1/2}x\|_H^2 \leq \|\Phi x\|_2^2 \leq \|\Psi^{1/2}x\|_H^2 + \ka, \qquad \mathrm{for \ all} \ x\in \cS$$
with probability at least $1-\eta$ provided that (\ref{eqn:KappageneralMeasPar}) holds.
\end{remark}
The following statements are immediate from Theorem~\ref{thm:KappaGeneral} by setting $\cS=\cP_{nv}$, $\cS=\cP_{nc}$ and $\cS=\cP_c$, respectively, and taking the trivial parametrization $\xi(x)=x$.
\begin{corollary}
\label{cor:RIPepsIsom}
Let $\cP$ be a set of points and let $\Phi$ be as in Theorem~\ref{thm:KappaGeneral}. For any $0<\del,\eta<1$ we have $\bP(\del_{\cP,\Phi}\geq \del)\leq\eta$ provided that
\begin{equation*}
m \geq C\al^2\del^{-2} \ \max\{\ga_2^2(\cP_{nv},d_{H}), \log(\eta^{-1})\}.
\end{equation*}
Moreover, for any $0<\eps,\eta<1$ we have $\bP(\eps_{\cP,\Phi}\geq \eps)\leq\eta$ whenever
\begin{equation*}
m \geq C\al^2\eps^{-2} \ \max\{\ga_2^2(\cP_{nc},d_{H}), \log(\eta^{-1})\}.
\end{equation*}
Finally, for any $0<\zeta,\eta<1$ we have $\bP(\zeta_{\cP,\Phi}\geq \zeta)\leq\eta$ if
$$m\geq C\al^2\zeta^{-2}\Del_{H}^2(\cP) \ \max\{\ga_2^2(\cP,d_{H}), \Del_{H}^2(\cP)\log(\eta^{-1})\}.$$
\end{corollary}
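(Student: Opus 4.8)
The plan is to invoke Theorem~\ref{thm:KappaGeneral} three times, each time choosing the set $\cS$ appropriately and taking the trivial parametrization $\xi(x)=x$, for which $d_\xi=d_H$ and hence $\ga_2(\Xi,d_\xi)=\ga_2(\cS,d_H)$.

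For the first assertion I would set $\cS=\cP_{nv}$, the set of normalized vectors from (\ref{eqn:RIconstKappa}). Every element of $\cP_{nv}$ has unit norm, so $\DelO_H(\cP_{nv})=1$ and the maximum in (\ref{eqn:KappageneralMeasPar}) collapses to $\max\{\ga_2^2(\cP_{nv},d_H),\log(\eta^{-1})\}$. As noted in the discussion following the definition of $\ka_{\cS,\Phi}$, one has $\ka_{\cP_{nv},\Phi}=\del_{\cP,\Phi}$; writing $\del$ for $\ka$ then gives precisely the stated bound. For the second assertion I would instead take $\cS=\cP_{nc}$, the set of normalized chords: again each element has unit norm, so $\DelO_H(\cP_{nc})=1$, and since $\ka_{\cP_{nc},\Phi}=\eps_{\cP,\Phi}$ the identical computation yields the bound with $\eps$ replacing $\del$ and $\cP_{nc}$ replacing $\cP_{nv}$.

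For the third assertion I would take $\cS=\cP_c$, the set of chords from (\ref{eqn:defChords}). Here the radius is $\DelO_H(\cP_c)=\sup_{x,y\in\cP}\|x-y\|_H=\Del_H(\cP)$, which need not equal $1$, so the factor $\DelO_H^2(\cS)$ in (\ref{eqn:KappageneralMeasPar}) must be retained; substituting this value and using $\ka_{\cP_c,\Phi}=\zeta_{\cP,\Phi}$ gives the stated condition on $m$ with $\ka=\zeta$.

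The one point deserving a word of care is that Theorem~\ref{thm:KappaGeneral} is applied with $\Phi$ a subgaussian map on each of $\cP_{nv}$, $\cP_{nc}$, $\cP_c$ with a common parameter $\al$. When $\Phi$ is a subgaussian map on all of $H$, as in every example in Section~\ref{sec:masterBound}, this is automatic: by linearity $[\Phi(x-y)]_i=[\Phi(x)]_i-[\Phi(y)]_i$, so conditions (a)--(d) of Definition~\ref{def:subgaussianMap} restrict from $H$ to each of the three sets, the isotropy identity transferring automatically as an identity between quadratic forms. Apart from this remark there is no real obstacle; the corollary is a direct specialization of the master bound.
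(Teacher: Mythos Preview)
Your proposal is correct and follows exactly the paper's own approach: the paper states that the corollary is ``immediate from Theorem~\ref{thm:KappaGeneral} by setting $\cS=\cP_{nv}$, $\cS=\cP_{nc}$ and $\cS=\cP_c$, respectively, and taking the trivial parametrization $\xi(x)=x$,'' which is precisely what you do. The only detail left implicit (both by you and by the paper) is that in the third case the direct substitution gives $\ga_2^2(\cP_c,d_H)$ rather than the $\ga_2^2(\cP,d_H)$ appearing in the statement; these agree up to a universal constant, since an admissible sequence $(T_n)$ for $\cP$ yields the admissible sequence $(T_{n-1}-T_{n-1})$ for $\cP_c$.
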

As a first illustration of Corollary~\ref{cor:RIPepsIsom}, note that it implies an extension of Theorem~\ref{thm:JLclassicalIntro} to general subgaussian maps as
$$\ga_2^2(\cP_{nc}) \lesssim \log|\cP_{nc}| \leq 2\log|\cP|.$$
Since the dependence of $m$ on $\eps$ and $\eta$ in (\ref{eqn:condmIntro}) is optimal, we see that in general one cannot expect a better dependence of $m$ on $\ka$ and $\eta$ in (\ref{eqn:KappageneralMeasPar}).\par
In the remainder of this paper we derive dimensionality reduction results for concrete data structures from Theorem~\ref{thm:KappaGeneral}. The technical work is to derive a good estimate for the complexity parameter $\ga_2^2(\Xi,d_{\xi})$ appearing in (\ref{eqn:KappageneralMeasPar}).

\section{Sets with low covering dimension}
\label{sec:covDim}

In this section we consider dimensionality reduction for sets with low covering dimension, in particular finite unions of subspaces.
\begin{definition}
We say that a metric space $(\cX,d)$ has \emph{covering dimension $K>0$ with parameter $c>0$ and base covering $N_0>0$} if, for all $0<u\leq 1$,
$$N(\cX,d,u\Del_{d}(\cX)) \leq N_0\Big(\frac{c}{u}\Big)^K.$$
\end{definition}
Often $c$ and $N_0$ are some small universal constants. In this situation we will loosely say that $(\cX,d)$ has covering dimension $K$.
\begin{example}[Unit ball of a finite-dimensional space]
\label{exa:covDimSubspace}
A well-known example is the unit ball $B_{\cX}$ of a $K$-dimensional normed space $\cX$. Using a standard volumetric argument (see e.g.\ \cite[Proposition C.3]{FoR13}) one shows that for any $0<u\leq 1$,
$$N(B_{\cX},d_{\cX},u) \leq \Big(1+\frac{2}{u}\Big)^K \leq \Big(\frac{3}{u}\Big)^K.$$
\end{example}
\begin{example}[Doubling dimension]
If $(\cX,d)$ is a metric space, then the \emph{doubling constant} $\la_{\cX}$ of $\cX$ is the smallest integer $\la$ such that for any $x\in \cX$ and $u>0$, the ball $B(x,u)$ can be covered by at most $\la$ balls of radius $u/2$. One can show that \cite{InN07} for all $0<u\leq 1$,
$$N(\cX,d,u\Del_d(X)) \leq \Big(\frac{2}{u}\Big)^{\log_2 \la_X}.$$
That is, $(\cX,d)$ has covering dimension $\log_2\la_X$. The latter number is also known as the \emph{doubling dimension} of $(X,d)$. The notion of doubling dimension was considered in the context of dimensionality reduction in, for example, \cite{AHY13} and \cite{InN07}.
\end{example}
We now formulate a dimensionality reduction result for sets with low covering dimension. In the proof we use that for $c,u_*>0$,
\begin{equation}
\label{eqn:intEstimate}
\int_0^{u_*} \log^{1/2}(c/u) \ du \leq u_*\log^{1/2}\Big(\frac{ec}{u_*}\Big).
\end{equation}
A short proof of this estimate can be found in \cite[Lemma C.9]{FoR13}. The second statement in the following result was obtained by a different method in \cite[Theorem 3]{Blu11} (note that this result was erroneously stated in \cite{BlD09}), but with a suboptimal dependence on $\del$.
\begin{corollary}
\label{cor:covDim}
Let $S_1,\ldots,S_k$ be subsets of a Hilbert space $H$ and let $S=\cup_{i=1}^k S_i$. Set
$$S_{i,nv} = \{x/\|x\|_2 \ : \ x\in S_i\}.$$
Suppose that $S_{i,nv}$ has covering dimension $K_i$ with parameter $c_i$ and base covering $N_{0,i}$ with respect to $d_H$. Set $K=\max_i K_i$, $c=\max_i c_i$ and $N_0=\max_i N_{0,i}$. Let $\Phi:\Om\ti H\rightarrow \R^m$ be a subgaussian map on $S_{nv}$. Then, for any $0<\del,\eta<1$ we have $\bP(\del_{S,\Phi}\geq \del)\leq\eta$ provided that
\begin{equation*}
m \geq C\al^2\del^{-2} \ \max\{\log k + \log N_0+ K\log(c), \log(\eta^{-1})\}.
\end{equation*}
In particular, if each $S_i$ is a $K_i$-dimensional subspace of $\R^n$, then $\bP(\del_{S,\Phi}\geq \del)\leq\eta$ if
\begin{equation*}
m \geq C\al^2\del^{-2} \ \max\{\log k + K, \log(\eta^{-1})\}.
\end{equation*}
\end{corollary}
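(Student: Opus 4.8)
The plan is to deduce Corollary~\ref{cor:covDim} from Corollary~\ref{cor:RIPepsIsom} (the first statement, on the restricted isometry constant), so the only real task is to bound $\ga_2^2(S_{nv},d_H)$ where $S_{nv}=\cup_{i=1}^k S_{i,nv}$. First I would recall from Section~\ref{sec:gamma2} the chain of estimates $\ga_2(T,d)\lesssim I_2(T,d)=\int_0^{\Del_d(T)}\log^{1/2}N(T,d,u)\,du$, so that the whole problem reduces to controlling the entropy integral of the union $S_{nv}$.

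The key observation is that each $S_{i,nv}$ sits in the unit sphere of $H$, hence $\Del_H(S_{i,nv})\leq 2$ and $\Del_H(S_{nv})\leq 2$. A covering of $S_{nv}$ by $u$-balls can be obtained by taking a $u$-cover of each $S_{i,nv}$ and unioning them, so $N(S_{nv},d_H,u)\leq \sum_{i=1}^k N(S_{i,nv},d_H,u)\leq k\,N_0(2c/u)^K$ for all $0<u\leq 2$ (here I use $\Del_H(S_{i,nv})\leq 2$ to rewrite the covering-dimension hypothesis $N(S_{i,nv},d_H,u\Del_H(S_{i,nv}))\leq N_0(c/u)^K$ as a bound in terms of the absolute radius $u$, picking up the factor $2$ inside the power). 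Taking logarithms, $\log N(S_{nv},d_H,u)\leq \log k+\log N_0+K\log(2c/u)$.

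Next I would plug this into the entropy integral. Using $\sqrt{a+b}\leq\sqrt a+\sqrt b$ and integrating over $u\in(0,2]$, the contribution of the constant terms $\log k+\log N_0$ is $\lesssim (\log k+\log N_0)^{1/2}$, while the term $\int_0^2 \log^{1/2}(2c/u)\,du$ is handled by estimate~(\ref{eqn:intEstimate}) with $u_*=2$, giving $\lesssim \log^{1/2}(ec)$ up to an absolute constant (absorbing the $2$'s). Squaring, $\ga_2^2(S_{nv},d_H)\lesssim \log k+\log N_0+K\log(c)$ (after absorbing the harmless additive constants and the $e$ into the universal constant, and noting $c\geq 1$ in the relevant regime so $\log c\geq 0$; strictly one writes $K\log_+(c)$ or just $K\log(ec)$). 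Feeding this into the first bound of Corollary~\ref{cor:RIPepsIsom} yields the displayed lower bound on $m$.

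For the final sentence: if $S_i$ is a $K_i$-dimensional subspace of $\R^n$ then $S_{i,nv}$ is the unit sphere of that subspace, which by Example~\ref{exa:covDimSubspace} has covering dimension $K_i$ with $c=3$ and $N_0=1$ (a $u$-net of the sphere of a $K_i$-dimensional space has at most $(3/u)^{K_i}$ points; the sphere is covered by the ball-cover). So $\log N_0=0$, $\log c$ is an absolute constant, and $K\log(c)\simeq K$, giving $m\geq C\al^2\del^{-2}\max\{\log k+K,\log(\eta^{-1})\}$. The only mildly delicate point — the place where one must be a little careful rather than merely routine — is keeping the rescaling by $\Del_H(S_{i,nv})$ straight when converting the covering-dimension hypothesis (which is stated relative to the diameter) into an absolute-scale entropy bound valid up to $u=\Del_H(S_{nv})\leq 2$; everything else is a direct application of (\ref{eqn:gammaFunEstEntInt}), (\ref{eqn:intEstimate}) and Corollary~\ref{cor:RIPepsIsom}.
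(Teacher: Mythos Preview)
Your proposal is correct and follows essentially the same route as the paper: bound $\ga_2(S_{nv},d_H)$ via the entropy integral (\ref{eqn:gammaFunEstEntInt}), use the union bound $N(S_{nv},d_H,u)\leq\sum_i N(S_{i,nv},d_H,u)$ together with the covering-dimension hypothesis, evaluate the resulting integral with (\ref{eqn:intEstimate}), and feed the outcome into the first statement of Corollary~\ref{cor:RIPepsIsom}; the subspace case then follows from Example~\ref{exa:covDimSubspace}. If anything, you are more careful than the paper's own proof, which silently integrates over $(0,1]$ and drops the $N_0$ and diameter-rescaling factors in the displayed covering-number chain; your explicit handling of $\Del_H(S_{i,nv})\leq 2$ is exactly the right bookkeeping.
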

From Corollary~\ref{cor:covDim} one can readily deduce that $\bP(\eps_{S,\Phi}\geq \eps)\leq\eta$ if
$$m \geq C\al^2\del^{-2} \ \max\{\log k + \log N_0+ K\log(c), \log(\eta^{-1})\},$$
see the proof of Theorem~\ref{thm:embedUnion} below.
\begin{proof}
We use (\ref{eqn:gammaFunEstEntInt}) to estimate
$$\ga_{2}(S_{nv},d_H) \lesssim \int_0^{1} \log^{1/2} N(S_{nv},d_H,u) \ du.$$
Clearly, if $N_i$ is an $u$-net for $S_{i,nv}$, then $\cup_{i=1}^k N_i$ is an $u$-net for $S_{nv}$. Therefore, using our assumption on the $S_{i,nv}$, we obtain
\begin{equation*}
N(S_{nv},d_H,u) \leq \sum_{i=1}^k N(S_{i,nv},d_H,u) \leq \sum_{i=1}^k \Big(\frac{c_i}{u}\Big)^{K_i} \leq k\Big(\frac{c}{u}\Big)^{K}.
\end{equation*}
Using (\ref{eqn:intEstimate}) we arrive at
\begin{align*}
\ga_{2}(S_{nv},d_H) & \lesssim \int_0^{1} \log^{1/2}(k(c/u)^{K}) \ du \\
& \leq \log^{1/2}(k) + K^{1/2}\int_0^{1} \log^{1/2}(c/u) \ du \lesssim \log^{1/2}(k) + K^{1/2}\log^{1/2}(c).
\end{align*}
The first part of the result now follows from the first statement in Corollary~\ref{cor:RIPepsIsom} and the second part follows by the observation in Example~\ref{exa:covDimSubspace}.
\end{proof}
To illustrate Corollary~\ref{cor:covDim}, we consider four examples.
\begin{example}[Sparse vectors: the `usual' RIP]
\label{exa:usualRIP}
We derive the restricted isometry property on $s$-sparse vectors for subgaussian maps $\Phi:\Om\ti\R^n\rightarrow\R^m$, a classical result from compressed sensing \cite{BDD08,CaT06,Don06,MPT07,MPT08}. For $x\in \R^n$ we set $$\|x\|_0 = |\{1\leq i\leq n \ : \ x_i\neq 0\}|.$$
A vector is called $s$-sparse if $\|x\|_0\leq s$. Let
$$D_{s,n} = \{x \in \R^n \ : \ \|x\|_0\leq s\}$$
be the set of $s$-sparse vectors. The restricted isometry constant $\del_s$ of $\Phi$ is defined as the smallest constant $\del$ such that
$$(1-\del)\|x\|_2^2 \leq \|\Phi x\|_2^2 \leq (1+\del)\|x\|_2^2 \qquad \mathrm{for \ all} \ x\in D_{s,n}.$$
In our notation, $\del_s = \del_{D_{s,n},\Phi}$. Note that we can write
$$D_{s,n} = \bigcup_{I\subset\{1,\ldots,n\}, |I|=s} S_I,$$
where $S_I$ is the $s$-dimensional subspace
$$S_I = \{x \in \R^n \ : \ x_i = 0 \ \mathrm{if} \ i\in I^c\}.$$
Since the number of $s$-element subsets of $\{1,\ldots,n\}$ is
$$\binom{n}{s} \leq \Big(\frac{en}{s}\Big)^s,$$
the second part of Corollary~\ref{cor:covDim} implies that $\bP(\del_s\geq \del)\leq\eta$ provided that
\begin{equation*}
m \geq C\al^2\del^{-2} \ \max\{s\log(en/s), \log(\eta^{-1})\}.
\end{equation*}
The scaling of this lower bound in $n$ and $s$ is optimal, see \cite[Corollary 10.8]{FoR13}.
\end{example}
\begin{example}[Cosparse vectors: the $\Psi$-RIP]
In many signal processing applications, signals of interest are not sparse themselves in the standard basis, but can rather be represented as a sparse vector. Let $\Upsilon:\R^n\rightarrow\R^p$ be a linear operator, which is usually called the `analysis operator' in the literature. We are interested in elements $x\in \R^n$ such that $\Ups x$ is sparse. It has become customary to count the number of zero components of $\Ups x$, rather than the number of nonzero ones. Accordingly, a vector $x\in \R^n$ is called \emph{$l$-cosparse with respect to $\Ups$} if there is a set $\La\subset \{1,\ldots,p\}$ with $|\La|=l$ such that $\Ups_{\La} x=0$, where $\Ups_{\La}:\R^{n}\rightarrow \R^p$ is the operator obtained by setting the rows of $\Ups$ indexed by $\Lambda^c$ equal to zero. Let $N_{\La}$ be the null space of $\Ups_{\La}$, then we can write the set of $l$-cosparse vectors as
$$C_{\Ups,l,p}=\bigcup_{|\La|=l} N_{\La}.$$
The \emph{$\Ups$-restricted isometry constant $\del_l$} of $\Phi$ is defined as the smallest possible $\del>0$ such that
$$(1-\del)\|x\|_2^2 \leq \|\Phi x\|_2^2 \leq (1+\del)\|x\|_2^2 \qquad \mathrm{for \ all} \ x\in C_{\Ups,l,p}.$$
In our notation, $\del_l = \del_{C_{\Ups,l,p},\Phi}$. Observe that $\dim(N_{\La})\leq n-l$ and the number of $l$-element subsets of $\{1,\ldots,p\}$ is
$$\binom{p}{l} \leq \Big(\frac{ep}{l}\Big)^l.$$
The second part of Corollary~\ref{cor:covDim} now implies that $\bP(\del_l\geq \del)\leq\eta$ if
\begin{equation*}
m \geq C\al^2\del^{-2} \ \max\{l\log(ep/l) + (n-l), \log(\eta^{-1})\}.
\end{equation*}
This result improves upon the RIP-result in \cite[Theorem 3.8]{GNE13}. In fact, as is heuristically explained in \cite[Section 6.1]{NDE13}, one cannot expect a better lower bound for $m$. An upper bound on $\del_l$ leads to performance guarantees for greedy-like recovery algorithms for cosparse vectors from subgaussian measurements, see \cite{GNE13} for some results in this direction.
\end{example}
\begin{example}[Matrix RIP]
Another direct consequence of Corollary~\ref{cor:covDim} is a new proof of the restricted isometry property for subgaussian matrix maps. This property plays the same role in low-rank matrix recovery as the `usual' restricted isometry property discussed in Example~\ref{exa:usualRIP} plays in compressed sensing, see e.g.\ \cite{CaP11,RFP10} for further information.\par
We use the following notation. Given two matrices $X,Y \in \R^{n_1\ti n_2}$ we consider the Frobenius inner product
$$\langle X,Y\rangle = \sum_{i=1}^{n_1}\sum_{j=1}^{n_2} X_{ij}Y_{ij}.$$
Let $\|X\|_F = \langle X,X\rangle^{1/2}$ be the corresponding norm and $d_F(X,Y) = \|X-Y\|_F$ be the induced metric. Also, we use $\rank(X)$ to denote the rank of $X$. For $1\leq r\leq \min\{n_1,n_2\}$ we define the restricted isometry constant $\del_r$ of a map $\Phi:\R^{n_1\ti n_2}\rightarrow \R^m$ as the smallest constant $\del>0$ such that
$$(1-\del)\|X\|_F^2 \leq \|\Phi X\|_F^2 \leq (1+\del)\|X\|_F^2 \qquad \mathrm{for \ all} \ X\in \R^{n_1\ti n_2} \ \mathrm{with} \ \rank(X)\leq r.$$
If we set
$$D_{r} = \{X \in \R^{n_1\ti n_2} \ : \ \|X\|_F = 1, \rank(X)\leq r\},$$
then $\del_{r}=\del_{D_{r},\Phi}$ in our notation. The covering number estimate \cite[Lemma 3.1]{CaP11}
$$N(D_{r},d_F,u) \leq (9/u)^{r(n_1+n_2+1)} \qquad (0<u\leq 1),$$
shows that $D_{r}$ has covering dimension $r(n_1+n_2+1)$ in $(\R^{n_1\ti n_2},d_F)$. The first part of Corollary~\ref{cor:covDim} implies for any subgaussian map $\Phi$ that $\bP(\del_{r}\geq \del)\leq \eta$ if
$$m \geq C\al^2\del^{-2} \ \max\{r(n_1+n_2+1), \log(\eta^{-1})\}.$$
This result was obtained in a different way in \cite[Theorem 2.3]{CaP11}, see also \cite[Theorem 4.2]{RFP10} for a slightly worse result.
\end{example}
\begin{example}[Tensor RIP]
The previous example can be extended to higher order tensors. Let $d\geq 2$ and set $n=(n_1,\ldots,n_d) \in \N^d$. Given two tensors $X,Y \in \R^{n_1\ti\cdots\ti n_d}$ we consider their Frobenius inner product
$$\langle X,Y\rangle = \sum_{i_1=1}^{n_1}\cdots\sum_{i_d=1}^{n_d} X(i_1,\ldots,i_d)Y(i_1,\ldots,i_d).$$
Let $\|X\|_F = \langle X,X\rangle^{1/2}$ be the corresponding norm and $d_F(X,Y) = \|X-Y\|_F$ be the induced metric. Let $\rank(X)$ denote the rank of $X$ associated with its HOSVD decomposition, see \cite{RSS13} for more information.\par
Given $r=(r_1,\ldots,r_d)$, $0\leq r_i\leq n_i$, we define the restricted isometry constant $\del_r$ of a map $\Phi:\R^{n_1\ti\cdots\ti n_d}\rightarrow \R^m$ as the smallest constant $\del>0$ such that
$$(1-\del)\|X\|_F^2 \leq \|\Phi X\|_F^2 \leq (1+\del)\|X\|_F^2 \ \ \ \mathrm{for \ all} \ X\in \R^{n_1\ti\cdots\ti n_d} \ \mathrm{with} \ \rank(X)\leq r.$$
If we set
$$D_{r} = \{X \in \R^{n_1\ti\cdots\ti n_d} \ : \ \|X\|_F = 1, \rank(X)\leq r\},$$
then $\del_{r}=\del_{D_{r},\Phi}$ in our notation. It is shown in \cite{RSS13} that for any $0<u\leq 1$
$$N(D_{r},d_F,u) \leq (3(d+1)/u)^{r_1\cdots r_d + \sum_{i=1}^d n_ir_i}.$$
In other words, $D_{r}$ has covering dimension $r_1\cdots r_d + \sum_{i=1}^d n_ir_i$ with parameter $3(d+1)$. Corollary~\ref{cor:covDim} implies that for any subgaussian map $\Phi$ and any $0<\del,\eta<1$ we have $\bP(\del_{r}\geq \del)\leq \eta$, provided that
\begin{equation*}
m\geq C \al^2\del^{-2} \ \max\Big\{\Big(r_1\cdots r_d + \sum_{i=1}^d n_ir_i\Big)\log(d), \log(\eta^{-1})\Big\}.
\end{equation*}
This result was obtained originally for a more restricted class of subgaussian maps in \cite{RSS13}.
\end{example}
The results presented in the four examples above can be derived in a different, more elementary fashion using the $\eps$-net technique, see \cite{BDD08,MPT08}, \cite{GNE13}, \cite{CaP11,RFP10}, and \cite{RSS13}, respectively. In fact, this is already true for the statement in Corollary~\ref{cor:covDim}. The results in the following two sections cannot be achieved using the $\eps$-net technique, however, and therefore generic chaining methods, which are at the basis of Theorem~\ref{thm:supAverages}, become necessary to achieve the best results.

\section{Infinite union of subspaces}
\label{sec:unionsubspaces}

Many sets of signals relevant to signal processing can be expressed as a possibly infinite union of finite-dimensional subspaces of a Hilbert space. For example, signals with various forms of structured sparsity (e.g.\ sparse, cosparse, block sparse, simultaneously sparse data), piecewise polynomials, certain finite rate of innovations models and overlapping echoes can be described in this fashion \cite{Blu11,BlD09,ElM09,LuD08}. In this section we prove a Johnson-Lindenstrauss embedding result for an infinite union of subspaces.\par
We consider the following setup. Let $H$ be a Hilbert space and let $B_H$ denote its unit ball. Let $\Theta$ be a parameter set and suppose that for every $\theta\in\Theta$ we are given a finite-dimensional subspace $S_{\theta}$ of $H$. We use $P_{\theta}$ to denote the orthogonal projection onto $S_{\theta}$. It will be natural to consider the Finsler metric on $\Theta$, which is defined by
$$d_{\Fin}(\theta,\theta') := \|P_{\theta} - P_{\theta'}\| \qquad (\theta,\theta' \in \Theta).$$
If two subspaces $S_{\theta}, S_{\theta'}$ have the same dimension, then the Finsler distance satisfies
$$d_{\Fin}(\theta,\theta') = \sin(\gamma(\theta,\theta')),$$
where $\gamma(\theta,\theta')$ is the largest canonical angle (or largest principal angle) between the subspaces $S_{\theta}$ and $S_{\theta'}$ \cite[Corollary 2.6]{Ste73}. Define the union
\begin{equation}
\label{eqn:defUnion}
\cU=\bigcup_{\theta \in \Theta}S_{\theta}.
\end{equation}
We are interested in reducing the dimensionality of $\cU$ using a subgaussian map. To achieve this, we apply Theorem~\ref{thm:KappaGeneral} with a suitable parametrization of $\cU\cap B_H$. We estimate the relevant $\ga_2$-functional in the following lemma.
\begin{lemma}
\label{lem:gamma2EstUnion}
Set $K=\sup_{\theta\in\Theta} \mathrm{dim}(S_{\theta})$. Let $\xi:\Theta\ti B_H\rightarrow \cU\cap B_H$ be the parametrization defined by $\xi(\theta,x) = P_{\theta}x$ and let $d_{\xi}$ be as in (\ref{eqn:dxiDef}). Then,
$$\ga_{2}(\Theta\ti B_H,d_{\xi}) \lesssim \sqrt{K} + \ga_{2}(\Theta,d_{\Fin}).$$
\end{lemma}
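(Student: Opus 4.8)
The plan is to construct an admissible sequence for $(\Theta\ti B_H,d_\xi)$ by hand, combining a near-optimal admissible sequence for $\ga_2(\Theta,d_{\Fin})$ with, at each level, suitable nets of the unit balls $B_{S_\theta}$ of the individual subspaces. The elementary fact driving everything is the splitting inequality: for $(\theta,x),(\theta',x')\in\Theta\ti B_H$, since $P_{\theta'}$ is an orthogonal projection and $\|x\|_H\le1$,
\[
d_\xi\big((\theta,x),(\theta',x')\big)=\|P_\theta x-P_{\theta'}x'\|_H\le\|(P_\theta-P_{\theta'})x\|_H+\|P_{\theta'}x-P_{\theta'}x'\|_H\le d_{\Fin}(\theta,\theta')+\|P_{\theta'}x-P_{\theta'}x'\|_H.
\]
The key point is that the residual term $\|P_{\theta'}x-P_{\theta'}x'\|_H$ involves only vectors of $S_{\theta'}$, a space of dimension at most $K$, so the infinite-dimensionality of $B_H$ is harmless. (A cruder bound $d_\xi\le d_{\Fin}+d_H$ followed by the product rule for $\ga_2$ would be useless, since $\ga_2(B_H,d_H)=\infty$; the net used in the $x$-variable genuinely has to depend on $\theta$.)

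Next I would set up the construction, assuming $1\le K<\infty$ and $\ga_2(\Theta,d_{\Fin})<\infty$, the statement being trivial otherwise. Fix an admissible sequence $(\Theta_n)_{n\ge0}$ of $\Theta$, near-optimal for $\ga_2(\Theta,d_{\Fin})$, with maps $\pi_n^\Theta:\Theta\to\Theta_n$ satisfying $\sup_\theta\sum_{n\ge0}2^{n/2}d_{\Fin}(\theta,\pi_n^\Theta(\theta))\le2\ga_2(\Theta,d_{\Fin})$. For $j\ge0$ set $\rho_j:=3\cdot2^{-2^j/K}$; by the volumetric estimate of Example~\ref{exa:covDimSubspace}, the unit ball $B_{S_\theta}$ of the space $S_\theta$ (of dimension at most $K$) admits a $\rho_j$-net $N_{\theta,j}\subset B_{S_\theta}$ with $|N_{\theta,j}|\le(3/\rho_j)^K=2^{2^j}$. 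Then put $T_0:=\{(\theta_0,0)\}$ for a fixed $\theta_0\in\Theta$ and, for $n\ge1$,
\[
T_n:=\bigcup_{\theta''\in\Theta_{n-1}}\{\theta''\}\ti N_{\theta'',\,n-1},\qquad\text{so that}\qquad|T_n|\le2^{2^{n-1}}\cdot2^{2^{n-1}}=2^{2^n},
\]
making $(T_n)_{n\ge0}$ admissible; here $\rho_{n-1}$ is chosen as the finest $x$-resolution one can afford at level $n$ once $\Theta$ has been resolved. The associated maps are $\pi_0\equiv(\theta_0,0)$ and, for $n\ge1$, $\pi_n(\theta,x):=\big(\pi_{n-1}^\Theta(\theta),\,q_n(\theta,x)\big)$, where $q_n(\theta,x)$ is a point of $N_{\pi_{n-1}^\Theta(\theta),\,n-1}$ closest to $P_{\pi_{n-1}^\Theta(\theta)}x$.

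Finally I would run the chaining estimate. The splitting inequality applied with $\theta'=\pi_{n-1}^\Theta(\theta)$, using $P_{\theta'}q_n(\theta,x)=q_n(\theta,x)$, gives $d_\xi((\theta,x),\pi_n(\theta,x))\le d_{\Fin}(\theta,\pi_{n-1}^\Theta(\theta))+\rho_{n-1}$ for $n\ge1$, while the $n=0$ term is $\|P_\theta x\|_H\le1$. Multiplying by $2^{n/2}$, summing over $n$ and re-indexing,
\[
\sum_{n\ge0}2^{n/2}d_\xi\big((\theta,x),\pi_n(\theta,x)\big)\le1+\sqrt2\sum_{k\ge0}2^{k/2}d_{\Fin}\big(\theta,\pi_k^\Theta(\theta)\big)+3\sqrt2\sum_{k\ge0}2^{k/2}2^{-2^k/K},
\]
and the middle sum is at most $2\sqrt2\,\ga_2(\Theta,d_{\Fin})$. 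The only computation needing a moment's care is $\sum_{k\ge0}2^{k/2}2^{-2^k/K}\lesssim\sqrt K$: splitting at $k_0:=\min\{k:2^k\ge K\}$, the terms with $k\le k_0$ form a geometric sum (since $2^{-2^k/K}\le1$) bounded by a multiple of $2^{k_0/2}\lesssim\sqrt K$, while for $k=k_0+j$ with $j\ge1$ one has $2^k\ge K2^j$, so $2^{k/2}2^{-2^k/K}\le\sqrt{2K}\,2^{j/2}2^{-2^j}$, and $\sum_{j\ge1}2^{j/2}2^{-2^j}<\infty$. Taking the supremum over $(\theta,x)\in\Theta\ti B_H$ and using $1\le\sqrt K$ then yields $\ga_2(\Theta\ti B_H,d_\xi)\lesssim\sqrt K+\ga_2(\Theta,d_{\Fin})$. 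I expect the main obstacle to be conceptual rather than computational — recognizing that the $x$-error must be measured only after projecting by $P_{\theta'}$, so that at every scale it is controlled by a covering of a $K$-dimensional ball — after which the construction and estimates are routine.
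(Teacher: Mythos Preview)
Your proof is correct and follows essentially the same route as the paper: the same splitting inequality $d_\xi\le d_{\Fin}(\theta,\theta')+\|P_{\theta'}(x-x')\|_H$, the same product construction $T_n=\Theta_{n-1}\times(\text{net at level }n-1)$, and the same shift $n\mapsto n-1$ to keep $|T_n|\le2^{2^n}$. The only cosmetic differences are that you place your nets $N_{\theta,j}$ inside $B_{S_\theta}$ (the paper works with nets of $B_H$ in the projected semi-metric $d_{n,\theta}$, which amounts to the same thing), and that you bound the subspace contribution by directly summing $\sum_k2^{k/2}2^{-2^k/K}\lesssim\sqrt K$, whereas the paper rewrites this sum as an entropy integral $\int_0^1\log^{1/2}N(B_{\R^K},d_2,u)\,du$ before applying the volumetric estimate; both computations are equivalent.
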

\begin{proof}
Let $(\Theta_n)_{n\geq 0}$ be any admissible sequence in $\Theta$ and let $\rho=(\rho_n)_{n\geq 0}$ be an associated sequence of maps $\rho_n:\Theta\rightarrow \Theta_n$. For any given $\theta \in \Theta$ and $n\geq 0$ we define a semi-metric on $B_H$ by
$$d_{n,\theta}(x,y) = \|P_{\rho_{n}(\theta)}(x-y)\|_{H}.$$
Next, for every $\theta \in \Theta$ we define an admissible sequence $\cH_{\theta} = (H_{n,\theta})_{n\geq 0}$ of $B_H$ by
$$H_{n,\theta} := \argmin_A \sup_{x\in B_H} d_{n,\theta}(x,A),$$
where the minimization is over all subsets $A$ of $B_H$ with $|A|\leq 2^{2^n}$. We use
$$e_{n,\theta} = \inf_{A} \sup_{x\in B_H} d_{n,\theta}(x,A) = \sup_{x\in B_H} d_{n,\theta}(x,H_{n,\theta})$$
to denote the associated entropy numbers. Finally, we define
$$\si_{n,\theta}(x) = \argmin_{y \in H_{n,\theta}} d_{n,\theta}(x,y).$$
For completeness we set $\Theta_{-1}$ equal to $\Theta_0$, $d_{-1,\theta}$ equal to $d_{0,\theta}$ and $H_{-1,\theta}$ equal to $H_{0,\theta}$. Now we define
$$T_n = \{(\rho_{n-1}(\theta),\si_{n-1,\theta}(x)) \in \Theta\ti B_H \ : \ \theta \in \Theta, x \in B_H\}.$$
Note that $\si_{n,\theta}(x)$ depends only on $\theta$ through $\rho_{n}(\theta)$. It follows that
$$|T_n| \leq |\Theta_{n-1}| \ 2^{2^{n-1}} \leq 2^{2^{n}}$$
and therefore $\cT=(T_n)_{n\geq 0}$ is an admissible sequence for $\Theta\ti B_H$. For $(\theta,x) \in \Theta\ti B_H$ we define $\pi_n(\theta,x) = (\rho_{n-1}(\theta),\si_{n-1,\theta}(x))$. By the triangle inequality,
\begin{align*}
d_{\xi}((\theta,x),\pi_{n}(\theta,x)) & = \|P_{\theta}x-P_{\rho_{n-1}(\theta)}\si_{n-1,\theta}(x)\|_{H} \\
& \leq \|(P_{\theta}-P_{\rho_{n-1}(\theta)})x\|_{H} + \|P_{\rho_{n-1}(\theta)}(x-\si_{n-1,\theta}(x))\|_{H} \\
& \leq \|P_{\theta} - P_{\rho_{n-1}(\theta)}\| + d_{n-1,\theta}(x,\si_{n-1,\theta}(x)) \\
& \leq \|P_{\theta} - P_{\rho_{n-1}(\theta)}\| + e_{n-1,\theta},
\end{align*}
where we used in the final estimate that,
$$d_{n,\theta}(x,\si_{n,\theta}(x)) = d_{n,\theta}(x,H_{n,\theta}) \leq \sup_{x\in B_H} d_{n,\theta}(x,H_{n,\theta}) = e_{n,\theta}.$$
Using these observations we obtain
\begin{align*}
\ga_2(\Theta\ti B_H,d_{\xi}) & \leq \sum_{n\geq 0} 2^{n/2} d_{\xi}((\theta,x),\pi_{n}(\theta,x)) \\
& \leq \sum_{n\geq 0}2^{n/2} d_{\Fin}(\theta,\rho_{n-1}(\theta)) + 2^{n/2} e_{n-1,\theta} \\
& \leq (1+\sqrt{2})\Big(\sum_{n\geq 0} 2^{n/2}d_{\Fin}(\theta,\rho_{n}(\theta)) + \sum_{n\geq 0} 2^{n/2}e_{n,\theta}\Big).
\end{align*}
It remains to bound the second term on the right hand side. Observe that
$$e_{n,\theta} = \inf\{u \ : \ N(B_H,d_{n,\theta},u)\leq 2^{2^n}\}.$$
If $(a_{\al})$ is a $u$-net for the unit ball in $S_{\rho_n(\theta)}$ with respect to $d_H$ and we pick $x_{\al}$ such that $a_{\al} = P_{\rho_n(\theta)}x_{\al}$, then $(x_{\al})$ is a $u$-net for $B_H$ with respect to $d_{n,\theta}$. Since $S_{\rho_n(\theta)}$ is at most $K$-dimensional, we find for all $u>0$,
$$N(B_H,d_{n,\theta},u) \leq N(B_{\R^K},d_2,u).$$
Thus we can conclude that $e_{n,\theta}\leq e_n$, where
$$e_{n} = \inf\{u \ : \ N(B_{\R^K},d_2,u)\leq 2^{2^n}\}.$$
Now, if $u<e_n$ then $N(B_{\R^K},d_2,u)\geq 2^{2^n}+1$ and hence we can estimate
\begin{align*}
\Big(1-\frac{1}{\sqrt{2}}\Big) \sum_{n\geq 0} 2^{n/2}e_{n} & \leq \sum_{n\geq 0} 2^{n/2}e_n - \sum_{n\geq 1} 2^{(n-1)/2}e_n \nonumber\\
& = \sum_{n\geq 0} 2^{n/2}(e_{n}-e_{n+1}) \nonumber\\
& \leq \frac{1}{\log^{1/2}(2)} \sum_{n\geq 0} \log^{1/2}(1+2^{2^n}) \ (e_{n}-e_{n+1}) \nonumber\\
& \leq \frac{1}{\log^{1/2}(2)} \sum_{n\geq 0} \int_{e_{n+1}}^{e_{n}} \log^{1/2} N(B_{\R^K},d_2,u) \ du \\
& = \frac{1}{\log^{1/2}(2)} \int_{0}^{1} \log^{1/2} N(B_{\R^K},d_2,u) \ du.
\end{align*}
As observed in Example~\ref{exa:covDimSubspace},
$$N(B_{\R^K},d_2,u) \leq (1+2u^{-1})^{K}.$$
Putting these estimates together we conclude using (\ref{eqn:intEstimate}) that
\begin{align*}
\sum_{n\geq 0} 2^{n/2}e_{n,\theta} & \leq \sqrt{K}\Big(\log^{1/2}(2) - \frac{\log^{1/2}(2)}{\sqrt{2}}\Big)^{-1}\int_0^{1} \log^{1/2}(1+2u^{-1}) \ du \\
& \leq \sqrt{K}\Big(\log^{1/2}(2) - \frac{\log^{1/2}(2)}{\sqrt{2}}\Big)^{-1}\log^{1/2}(3e).
\end{align*}
This completes the proof.
\end{proof}
Theorem~\ref{thm:KappaGeneral} and Lemma~\ref{lem:gamma2EstUnion} together imply the following result.
\begin{theorem}
\label{thm:RIPUnion}
Let $\cU$ be the union of subspaces defined in (\ref{eqn:defUnion}) and let $K=\sup_{\theta\in\Theta} \mathrm{dim}(S_{\theta})$. Let $\Phi:\Om\ti H\rightarrow \R^m$ be a subgaussian map on $\cU$. Then there is a constant $C>0$ such that for any $0<\del,\eta<1$ we have $\bP(\del_{\cU,\Phi}\geq \del)\leq\eta$ provided that
$$m \geq C\al^2\del^{-2} \ \max\{K+\ga_{2}^2(\Theta,d_{\mathrm{Fin}}), \log(\eta^{-1})\}.$$
\end{theorem}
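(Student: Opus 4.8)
The plan is to read off the result directly from Theorem~\ref{thm:KappaGeneral} and Lemma~\ref{lem:gamma2EstUnion}, as announced just before the statement. The first step is a harmless homogeneity reduction. Since $\Phi$ is linear, the restricted isometry constant on $\cU$ equals the constant $\ka$ of Theorem~\ref{thm:KappaGeneral} evaluated on the normalized vectors of $\cU$; that is, $\del_{\cU,\Phi}=\ka_{\cU_{nv},\Phi}$. As $\cU_{nv}\subseteq \cU\cap B_H$ and $\ka_{\cdot,\Phi}$ is a supremum over the indexing set, this gives $\del_{\cU,\Phi}\leq \ka_{\cU\cap B_H,\Phi}$, so it suffices to control the latter. (In fact equality holds, by scaling each $x\in\cU\cap B_H$ to $x/\|x\|_H$ and using $|\,\|\Phi x\|_2^2-\|x\|_H^2\,|=\|x\|_H^2\,|\,\|\Phi (x/\|x\|_H)\|_2^2-1\,|\leq |\,\|\Phi (x/\|x\|_H)\|_2^2-1\,|$, but only the inequality is needed.)

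Next I would apply Theorem~\ref{thm:KappaGeneral} with $\cS=\cU\cap B_H$ and the parametrization $\xi:\Theta\ti B_H\rightarrow\cU\cap B_H$, $\xi(\theta,x)=P_\theta x$, from Lemma~\ref{lem:gamma2EstUnion}. Three routine checks are required: (i) $\xi$ is surjective, since for $y\in\cU\cap B_H$ there is $\theta$ with $y\in S_\theta\subseteq\cU$ and then $\xi(\theta,y)=P_\theta y=y$; (ii) the restriction of $\Phi$ to $\cU\cap B_H\subseteq\cU$ is again a subgaussian map with the same parameter $\al$, because conditions (a)--(d) of Definition~\ref{def:subgaussianMap} are inherited by subsets; (iii) $\DelO_H(\cU\cap B_H)\leq 1$, so the factors $\DelO_H^2(\cS)$ in (\ref{eqn:KappageneralMeasPar}) may be replaced by $1$ (this only strengthens the hypothesis on $m$). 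With these, Theorem~\ref{thm:KappaGeneral} yields $\bP(\ka_{\cU\cap B_H,\Phi}\geq\del)\leq\eta$ whenever $m\geq C\al^2\del^{-2}\max\{\ga_2^2(\Theta\ti B_H,d_\xi),\log(\eta^{-1})\}$.

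Finally I would insert the estimate $\ga_2(\Theta\ti B_H,d_\xi)\lesssim\sqrt{K}+\ga_2(\Theta,d_{\mathrm{Fin}})$ of Lemma~\ref{lem:gamma2EstUnion}; squaring gives $\ga_2^2(\Theta\ti B_H,d_\xi)\lesssim K+\ga_2^2(\Theta,d_{\mathrm{Fin}})$, and the implicit constant is absorbed into $C$. Together with $\del_{\cU,\Phi}\leq\ka_{\cU\cap B_H,\Phi}$ from the first step, this is precisely the claimed bound. I do not expect any genuine obstacle: all the substantive work has already been carried out upstream, namely the empirical-process tail bound (Theorem~\ref{thm:supAverages}) feeding Theorem~\ref{thm:KappaGeneral}, and the two-scale chaining construction on $\Theta\ti B_H$ in Lemma~\ref{lem:gamma2EstUnion} that simultaneously resolves the subspace index (through $d_{\mathrm{Fin}}$) and the direction inside a $K$-dimensional subspace (contributing the $\sqrt{K}$). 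The only points deserving a sentence of care are the homogeneity reduction to the unit ball and the verification that Definition~\ref{def:subgaussianMap} passes to the subset $\cU\cap B_H$.
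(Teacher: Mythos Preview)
Your proposal is correct and follows exactly the paper's own argument: reduce $\del_{\cU,\Phi}=\ka_{\cU_{nv},\Phi}\leq\ka_{\cU\cap B_H,\Phi}$, apply Theorem~\ref{thm:KappaGeneral} with the parametrization $\xi(\theta,x)=P_\theta x$ (noting $\DelO_H(\cU\cap B_H)\leq 1$), and conclude via Lemma~\ref{lem:gamma2EstUnion}. The additional checks you spell out (surjectivity of $\xi$, inheritance of the subgaussian property by subsets) are routine and only make explicit what the paper leaves implicit.
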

\begin{proof}
Recall from (\ref{eqn:RIconstKappa}) that $\del_{\cU,\Phi}=\ka_{\cU_{nv},\Phi}$ and clearly $\ka_{\cU_{nv},\Phi}\leq \ka_{\cU\cap B_H,\Phi}$. Let $\xi$ be the parametrization of $\cU\cap B_H$ defined in Lemma~\ref{lem:gamma2EstUnion}. By Theorem~\ref{thm:KappaGeneral} we have $\bP(\ka_{\cU\cap B_H,\Phi}\geq \del)\leq\eta$ if
$$m \geq C\al^2\del^{-2} \ \max\{\ga_2^2(\Theta\ti B_H,d_{\xi}), \log(\eta^{-1})\}.$$
The assertion now follows from Lemma~\ref{lem:gamma2EstUnion}.
\end{proof}
Theorem~\ref{thm:RIPUnion} improves upon the second part of Corollary~\ref{cor:covDim} even if $\Theta$ is a finite set. Indeed, this follows from the bound $\ga_2^2(\Theta,d_{\Fin})\lesssim \log|\Theta|$.\par
Let us now derive a condition under which $\Phi$ preserves pairwise distances in $\cU$. For $\theta,\theta' \in \Theta$ let $S_{(\theta,\theta')}$ be the subspace spanned by $S_{\theta}$ and $S_{\theta'}$ and let $P_{(\theta,\theta')}$ be the projection onto this subspace.
\begin{theorem}
\label{thm:embedUnion}
Let $\Phi:\Om\ti H\rightarrow \R^m$ be a subgaussian map on $\cU$. Set $K=\sup_{\theta,\theta'}\mathrm{dim}(S_{(\theta,\theta')})$. On the set $\Theta\ti\Theta$ consider the metric
$$d_{\Fin}((\theta,\theta'),(\tau,\tau')) = \|P_{(\theta,\theta')} - P_{(\tau,\tau')}\|.$$
Then, there is a constant $C>0$ such that for any $0<\eps,\eta<1$ we have $\bP(\eps_{\cU,\Phi}\geq \eps)\leq\eta$ provided that
\begin{equation}
\label{eqn:condEmbedUnion}
m \geq C\al^2\eps^{-2} \ \max\{K + \ga_2^2(\Theta\ti\Theta,d_{\Fin}), \log(\eta^{-1})\}.
\end{equation}
\end{theorem}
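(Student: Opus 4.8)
The plan is to deduce the statement from the restricted isometry result for an infinite union of subspaces, Theorem~\ref{thm:RIPUnion}, by applying it to the set of \emph{chords} of $\cU$. The entry point is the relation $\eps_{\cP,\Phi}=\del_{\cP_c,\Phi}$ between the multiplicative precision and the restricted isometry constant on the chord set $\cP_c=\{x-y \ : \ x,y\in\cP\}$, which was recorded just after Definition~\ref{def:pairwiseDist}. Taking $\cP=\cU$, it thus suffices to bound the probability that the restricted isometry constant $\del_{\cU_c,\Phi}$ exceeds $\eps$.

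The key observation is that $\cU_c$ is itself an infinite union of finite-dimensional subspaces of $H$, this time indexed by $\Theta\ti\Theta$. Indeed, for $x\in S_\theta$ and $y\in S_{\theta'}$ we have $x-y\in S_\theta+S_{\theta'}=S_{(\theta,\theta')}$, and conversely every vector of $S_{(\theta,\theta')}=S_\theta+S_{\theta'}$ is of the form $a-b$ with $a\in S_\theta$ and $b\in S_{\theta'}$ (using that $S_{\theta'}$ is a linear subspace, so $-b\in S_{\theta'}$ whenever $b\in S_{\theta'}$). Hence
$$\cU_c=\bigcup_{(\theta,\theta')\in\Theta\ti\Theta}S_{(\theta,\theta')},$$
an infinite union of subspaces of dimension at most $K=\sup_{\theta,\theta'}\mathrm{dim}(S_{(\theta,\theta')})$, with $P_{(\theta,\theta')}$ the orthogonal projection onto the $(\theta,\theta')$-th subspace. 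The Finsler metric attached to this family on the new index set $\Theta\ti\Theta$ is exactly $d_{\Fin}((\theta,\theta'),(\tau,\tau'))=\|P_{(\theta,\theta')}-P_{(\tau,\tau')}\|$, the metric appearing in the statement.

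Now I would simply invoke Theorem~\ref{thm:RIPUnion} for the union $\cU_c$, with its index set taken to be $\Theta\ti\Theta$, its maximal subspace dimension equal to $K$, and its Finsler metric equal to $d_{\Fin}$; since $\Phi$ is linear, isotropic and subgaussian it is in particular a subgaussian map on $\cU_c$ with the same parameter $\al$. Theorem~\ref{thm:RIPUnion} then yields $\bP(\del_{\cU_c,\Phi}\geq\eps)\leq\eta$ provided that
$$m\geq C\al^2\eps^{-2}\ \max\{K+\ga_2^2(\Theta\ti\Theta,d_{\Fin}),\log(\eta^{-1})\},$$
which is precisely (\ref{eqn:condEmbedUnion}); combining with $\eps_{\cU,\Phi}=\del_{\cU_c,\Phi}$ completes the argument.

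I do not expect a genuine obstacle here: all the analytic work — the generic-chaining estimate of the relevant $\ga_2$-functional — has already been carried out in Lemma~\ref{lem:gamma2EstUnion} and packaged into Theorem~\ref{thm:RIPUnion}. The only points that need care are the elementary facts that the chord set of a union of subspaces is again a union of subspaces, that passing from $\Theta$ to $\Theta\ti\Theta$ and from $\sup_\theta\mathrm{dim}\,S_\theta$ to $\sup_{\theta,\theta'}\mathrm{dim}\,S_{(\theta,\theta')}$ is the correct bookkeeping, and that $\Phi$ may be used on $\cU_c$. The identical reduction applied to Corollary~\ref{cor:covDim} produces the multiplicative-error statement displayed just before this theorem, with $\log k$ replaced by $\log(k^2)\simeq\log k$ and $K$ by the maximal dimension of a pairwise sum $S_i+S_j$.
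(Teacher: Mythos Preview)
Your proposal is correct and follows essentially the same route as the paper: identify $\eps_{\cU,\Phi}$ with the restricted isometry constant on the chord set $\cU-\cU$, observe that this set is (contained in) the union $\cup_{(\theta,\theta')}S_{(\theta,\theta')}$ indexed by $\Theta\times\Theta$, and apply Theorem~\ref{thm:RIPUnion} to this new union. The paper uses only the inclusion $\cU-\cU\subset\cU_*$ rather than your equality, but this makes no difference to the argument.
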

\begin{proof}
Recall that $\eps_{\cU,\Phi}=\del_{\cU-\cU,\Phi}$. Since
$$\cU - \cU\subset \cU_* := \cup_{(\theta,\theta') \in \Theta\ti \Theta} S_{(\theta,\theta')},$$
we have $\del_{\cU-\cU,\Phi}\leq \del_{\cU_*,\Phi}$. The result follows by applying Theorem~\ref{thm:RIPUnion} to $\cU_*$, noting that $\mathrm{dim}(S_{(\theta,\theta')})\leq 2K$ for all $(\theta,\theta') \in S_{(\theta,\theta')}$.
\end{proof}
Clearly, if there exists a one-to-one map from $\cU$ into $\R^m$ then we must have $m\geq K$. In particular, the scaling of $m$ in $K$ in (\ref{eqn:condEmbedUnion}) cannot be improved.
\begin{remark}
\label{rem:Landweber}
Together with the main result of \cite{Blu11}, Theorem~\ref{thm:embedUnion} implies the following very general uniform signal recovery result. Suppose that we wish to recover a vector $x\in \cU$ from $m$ noisy measurements $y\in \R^m$ given by
\begin{equation}
\label{eqn:measurements}
y=\Phi x + e,
\end{equation}
where $e\in H$ represents the measurement error. If $m$ satisfies (\ref{eqn:condEmbedUnion}), then in the terminology of \cite{Blu11} the subgaussian map $\Phi$ is with probability $1-\eta$ a bilipschitz map on $\cU$ with constants $1-\eps$ and $1+\eps$. Therefore, if $(1+\eps)/(1-\eps)<3/2$, then with probability $1-\eta$ we can recover \emph{any} $x\in \cU$ robustly from the $m$ measurements $y$ in (\ref{eqn:measurements}) using a projective Landweber algorithm. We refer to \cite[Theorem 2]{Blu11} for details and a quantitative statement.
\end{remark}
In \cite{Man13,MaR13}, Mantzel and Romberg proved a version of Theorem~\ref{thm:RIPUnion} for a matrix $\Phi$ populated with i.i.d.\ standard Gaussian entries. They assume that $\Theta$ has covering dimension $K_{\Fin}$ with respect to $d_{\Fin}$, with base covering $N_0$. Their result in \cite{MaR13} states (in our terminology) that $\bP(\del_{\cU,\Phi}\geq \del)\leq \eta$ provided that
\begin{equation}
\label{eqn:lowerBoundMR}
m \geq C \max\{\del^{-1}\log(K),\del^{-2}\} \ \max\{K(K_{\Fin} + \log K + \log N_0),K\log(\eta^{-1})\}.
\end{equation}
Note that in this setup Theorem~\ref{thm:RIPUnion} implies that (cf.\ the argument in the proof of Corollary~\ref{cor:covDim})
\begin{equation}
\label{eqn:lowerBoundMRImp}
m \geq C\del^{-2}\max\{K + \log N_0 + K_{\Fin},\log(\eta^{-1})\}
\end{equation}
is already sufficient. Moreover, this statement extends to any subgaussian map, in particular the database-friendly map discussed in Example~\ref{exa:Achlioptas}.\par
The approach in \cite{Man13,MaR13} is very different from ours. The idea is to write
$$\del_{\cU,\Phi} = \sup_{\theta\in\Theta} \|P_{\theta}\Phi^*\Phi P_{\theta} - P_{\theta}\|$$
and to estimate the expected value of the right hand side using a (classical) chaining argument in the operator norm, based on the noncommutative Bernstein inequality. Note that this approach cannot yield the improved condition (\ref{eqn:lowerBoundMRImp}). For example, the factor $\log(K)$ in (\ref{eqn:lowerBoundMR}) is incurred through the use of the noncommutative Bernstein inequality, and therefore an artefact of the used method.

\section{Manifolds}
\label{sec:manifolds}

Let $\cM$ be a $K$-dimensional $C^1$-submanifold of $\R^n$, equipped with the Riemannian metric induced by the Euclidean inner product on $\R^n$. We use the following standard notation and terminology. For any $x\in \cM$ we let $T_x\cM$ denote the tangent space of $\cM$ at $x$ and let $P_x:\R^n\rightarrow T_x\cM$ be the associated projection onto $T_x\cM$. We use
$$T\cM = \bigcup_{x\in \cM} T_x\cM$$
to denote the tangent bundle of $\cM$. If $\ga:[a,b]\rightarrow \cM$ is a piecewise $C^1$ curve in $\cM$, then its length is defined as
$$L(\ga) = \int_a^b \|\ga'(t)\|_2 \ dt.$$
For $x,y\in \cM$, let $d_{\cM}(x,y)$ be the geodesic distance between $x$ and $y$, which can be described as
$$d_{\cM}(x,y) = \inf\{L(\ga) \ : \ \ga:[a,b]\rightarrow \cM \ \mathrm{piecewise} \ C^1, \ a,b\in\R, \ \ga(a)=x, \ \ga(b)=y\}.$$
For more information on Riemannian (sub)manifolds we refer to \cite{Lee97}.\par
Below we prove three different types of dimensionality reduction results for a subgaussian map $\Phi$. We derive a sufficient condition under which $\Phi$ uniformly preserves the lengths of all curves in $\cM$ up to a specified multiplicative error and conditions under which ambient distances are preserved up to an additive error and a multiplicative error, respectively.

\subsection{Preservation of curve lengths}

We can immediately apply Theorem~\ref{thm:RIPUnion} to derive a condition under which $\Phi$ \emph{uniformly} preserves the length of all curves in $\cM$ up to a specified multiplicative error.
\begin{theorem}
\label{thm:manifoldCurvePres}
Let $\cM$ be a $K$-dimensional $C^1$-submanifold of $\R^n$. Let $\Phi:\Om\ti\R^n\rightarrow \R^m$ be a subgaussian map. There is a constant $C>0$ such that for any $0<\eps,\eta<1$ we have with probability at least $1-\eta$ for any piecewise $C^1$-curve $\ga$ in $\cM$,
\begin{equation}
\label{eqn:curvePres}
(1-\eps)L(\ga)\leq L(\Phi\ga) \leq (1+\eps)L(\ga)
\end{equation}
provided that
\begin{equation}
\label{eqn:manifoldCurvePres}
m \geq C\al^2(2\eps-\eps^2)^{-2} \ \max\{K+\ga_{2}^2(\cM,d_{\mathrm{Fin}}), \log(\eta^{-1})\}.
\end{equation}
\end{theorem}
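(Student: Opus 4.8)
The plan is to reduce the curve-length statement to a uniform restricted isometry statement on the tangent bundle $T\cM$, and then invoke Theorem~\ref{thm:RIPUnion}. The key observation is that $T\cM = \bigcup_{x\in\cM} T_x\cM$ is precisely an (infinite) union of subspaces of $\R^n$, indexed by $\Theta = \cM$, where $S_\theta = T_\theta\cM$ has dimension $K$ and the associated Finsler metric $d_{\Fin}(x,y) = \|P_x - P_y\|$ is exactly the metric appearing in the statement. So the first step is: apply Theorem~\ref{thm:RIPUnion} with $\Theta = \cM$, $S_\theta = T_\theta\cM$, and $\cU = T\cM$. Since $\Phi$ is linear and $\Phi$ is a subgaussian map on $\R^n$, it is in particular a subgaussian map on $T\cM$. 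Thus there is a constant $C>0$ such that $\bP(\del_{T\cM,\Phi}\geq \del)\leq \eta$ provided $m \geq C\al^2\del^{-2}\max\{K+\ga_2^2(\cM,d_{\Fin}),\log(\eta^{-1})\}$.

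Next I would translate a bound on $\del_{T\cM,\Phi}$ into the curve-length bound \eqref{eqn:curvePres}. Fix a realization of $\Phi$ with $\del_{T\cM,\Phi}\leq \del$, i.e.
$$(1-\del)\|v\|_2^2 \leq \|\Phi v\|_2^2 \leq (1+\del)\|v\|_2^2 \qquad \text{for all } v\in T\cM.$$
Taking square roots, $(1-\del)^{1/2}\|v\|_2 \leq \|\Phi v\|_2 \leq (1+\del)^{1/2}\|v\|_2$ for all $v\in T\cM$; in the notation of Remark~\ref{rem:noSquaresMult}, $\Phi$ is a $\hat\eps$-isometry on $T\cM$ with $\hat\eps = \eps$ once $\del\leq 2\eps-\eps^2$. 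Now let $\ga:[a,b]\to\cM$ be any piecewise $C^1$ curve. Then $\ga'(t)\in T_{\ga(t)}\cM\subset T\cM$ for almost every $t$, and $(\Phi\ga)'(t) = \Phi(\ga'(t))$ by linearity of $\Phi$. Therefore, integrating the pointwise two-sided bound $|\,\|\Phi(\ga'(t))\|_2 - \|\ga'(t)\|_2\,| \leq \eps\|\ga'(t)\|_2$,
$$L(\Phi\ga) = \int_a^b \|\Phi(\ga'(t))\|_2 \ dt \in \big[(1-\eps)L(\ga),\,(1+\eps)L(\ga)\big],$$
which is \eqref{eqn:curvePres}. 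Since this holds simultaneously for all curves $\ga$ on the (probability $\geq 1-\eta$) event $\{\del_{T\cM,\Phi}\leq \del\}$, the uniformity over curves is automatic — it is inherited from the uniformity in the restricted isometry statement. Finally I would choose $\del := 2\eps-\eps^2$ and plug into the condition from Theorem~\ref{thm:RIPUnion}: $m \geq C\al^2(2\eps-\eps^2)^{-2}\max\{K+\ga_2^2(\cM,d_{\Fin}),\log(\eta^{-1})\}$, which is exactly \eqref{eqn:manifoldCurvePres}.

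The only mild technical point — not really an obstacle — is the passage from the squared to the unsquared isometry inequality and the almost-everywhere differentiability of a piecewise $C^1$ curve, so that $t\mapsto \|\ga'(t)\|_2$ and $t\mapsto \|\Phi(\ga'(t))\|_2$ are integrable and the pointwise estimate can be integrated; this is handled by the elementary inequality in Remark~\ref{rem:noSquaresMult} and the definition of $L(\ga)$. There is nothing deep beyond recognizing that the tangent bundle is a union of subspaces and that the correct complexity parameter is $\ga_2^2(\cM,d_{\Fin})$, with the constant-dimensional subspaces contributing the additive $K$; all of that is already packaged in Theorem~\ref{thm:RIPUnion} via Lemma~\ref{lem:gamma2EstUnion}. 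So the proof is essentially a two-line reduction plus a routine integration argument.
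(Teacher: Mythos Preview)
Your proposal is correct and follows essentially the same approach as the paper: recognize $T\cM=\bigcup_{x\in\cM}T_x\cM$ as an infinite union of $K$-dimensional subspaces indexed by $\Theta=\cM$, apply Theorem~\ref{thm:RIPUnion} with $\del=2\eps-\eps^2$, pass from the squared to the unsquared isometry via Remark~\ref{rem:noSquaresMult}, and integrate the pointwise bound on $\|\Phi\ga'(t)\|_2$ to obtain \eqref{eqn:curvePres}. The paper's proof differs only in presentation order (it performs the integration argument first and invokes Theorem~\ref{thm:RIPUnion} at the end), not in substance.
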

\begin{proof}
Let $\ga:[a,b]\rightarrow \cM$ be any piecewise $C^1$-curve in $\cM$, then $\Phi\ga$ is a piecewise $C^1$-curve and $(\Phi\ga)'(t)=\Phi\ga'(t)$ whenever $\ga$ is differentiable at $t$. Therefore,
$$(1-\del_{T\cM,\Phi})\|\ga'(t)\|_2^2 \leq \|(\Phi\ga)'(t)\|_2^2 \leq (1+\del_{T\cM,\Phi})\|\ga'(t)\|_2^2.$$
Note that if $\del_{T\cM,\Phi}\leq 2\eps-\eps^2$, then (see Remark~\ref{rem:noSquaresMult} for a similar observation)
$$(1-\eps)\|\ga'(t)\|_2 \leq \|(\Phi\ga)'(t)\|_2 \leq (1+\eps)\|\ga'(t)\|_2.$$
Integrating on both sides over $[a,b]$ yields
$$(1-\eps)L(\ga) \leq L(\Phi\ga) \leq (1+\eps)L(\ga).$$
By Theorem~\ref{thm:RIPUnion}, we have $\bP(\del_{T\cM,\Phi}\geq 2\eps-\eps^2)\leq \eta$ under condition (\ref{eqn:manifoldCurvePres}) and this implies the result.
\end{proof}
\begin{remark}
If the map $\Phi$ in Theorem~\ref{thm:manifoldCurvePres} also happens to be a manifold embedding, i.e., an immersion that is homeomorphic onto its image, then it preserves geodesic distances. That is, if for a given $0<\eps<1$, (\ref{eqn:curvePres}) holds for all piecewise $C^1$-curves in $\cM$, then
$$(1-\eps)d_{\cM}(x,y) \leq d_{\Phi\cM}(\Phi x,\Phi y) \leq (1+\eps)d_{\cM}(x,y), \qquad \mathrm{for \ all} \ x,y\in\cM.$$
Indeed, for given $x,y \in\cM$, let $\ga_{g,\Phi}$ be a geodesic between $\Phi(x)$ and $\Phi(y)$, and let $\ga$ be the preimage of $\ga_{g,\Phi}$. By (\ref{eqn:curvePres}),
$$(1-\eps)d_{\cM}(x,y)\leq (1-\eps)L(\ga) \leq L(\ga_{g,\Phi}) = d_{\Phi\cM}(\Phi(x),\Phi(y)).$$
Similarly, if $\ga_g$ is a geodesic between $x$ and $y$ in $\cM$, then
$$d_{\Phi\cM}(\Phi(x),\Phi(y)) \leq L(\Phi\ga_g) \leq (1+\eps)L(\ga_g) = (1+\eps)d_{\cM}(x,y).$$
\end{remark}

\subsection{Preservation of ambient distances: additive error}

We briefly consider maps that preserve pairwise ambient distances up to a specified additive error. The following result is similar to a result established for random projections in \cite[Theorem 9]{AHY13}.
\begin{proposition}
\label{pro:manifoldDD}
Let $\cM$ be a $C^1$-manifold with doubling dimension $D_{\cM}$ in the geodesic distance $d_{\cM}$ and let $\Del_{\cM}$ be its diameter in $d_{\cM}$. Let $\Phi:\Om\ti\R^n\rightarrow \R^m$ be a subgaussian map. Then, there is a constant $C>0$ such that $\bP(\zeta_{\cM,\Phi}\geq \zeta)\leq \eta$, provided that
$$m\geq C\al^2\zeta^{-2}\Del_{\cM}^4 \ \max\{D_{\cM},\log(\eta^{-1})\}.$$
\end{proposition}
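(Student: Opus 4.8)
The plan is to apply the additive-error part of Corollary~\ref{cor:RIPepsIsom} to the manifold $\cM$, which gives $\bP(\zeta_{\cM,\Phi}\geq\zeta)\leq\eta$ whenever $m\geq C\al^2\zeta^{-2}\Del_H^2(\cM)\max\{\ga_2^2(\cM,d_2),\Del_H^2(\cM)\log(\eta^{-1})\}$; here $d_2$ is the ambient Euclidean metric and $\Del_H(\cM)$ the ambient diameter. So the whole task reduces to two estimates: (i) controlling $\ga_2(\cM,d_2)$ by $\sqrt{D_{\cM}}\,\Del_\cM$ (up to constants), and (ii) replacing the ambient diameter $\Del_H(\cM)$ by the geodesic diameter $\Del_\cM$, which is legitimate since $\|x-y\|_2\le d_\cM(x,y)$ and hence $\Del_H(\cM)\le\Del_\cM$.

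For step (i) I would use the entropy bound \eqref{eqn:gammaFunEstEntInt}, $\ga_2(\cM,d_2)\lesssim I_2(\cM,d_2)=\int_0^{\Del_H(\cM)}\log^{1/2}N(\cM,d_2,u)\,du$. The doubling-dimension hypothesis is stated for the geodesic metric, but since the ambient metric is dominated by the geodesic one, any $u$-net for $(\cM,d_\cM)$ is a $u$-net for $(\cM,d_2)$, so $N(\cM,d_2,u)\le N(\cM,d_\cM,u)$. By the doubling-dimension estimate recalled in the excerpt (the Example on doubling dimension), $N(\cM,d_\cM,u\,\Del_\cM)\le(2/u)^{D_\cM}$ for $0<u\le 1$. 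Feeding this in and using \eqref{eqn:intEstimate}, one gets
\begin{align*}
I_2(\cM,d_2) &\le \int_0^{\Del_\cM}\log^{1/2}N(\cM,d_\cM,u)\,du \\
&= \Del_\cM\int_0^{1}\log^{1/2}N(\cM,d_\cM,v\,\Del_\cM)\,dv \\
&\le \Del_\cM\, D_\cM^{1/2}\int_0^1\log^{1/2}(2/v)\,dv \lesssim \Del_\cM\, D_\cM^{1/2}.
\end{align*}
Hence $\ga_2^2(\cM,d_2)\lesssim D_\cM\Del_\cM^2$.

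Plugging these two bounds into Corollary~\ref{cor:RIPepsIsom}, the first term inside the max becomes $\Del_H^2(\cM)\ga_2^2(\cM,d_2)\lesssim \Del_\cM^2\cdot D_\cM\Del_\cM^2=\Del_\cM^4 D_\cM$, while the second is $\Del_H^4(\cM)\log(\eta^{-1})\le\Del_\cM^4\log(\eta^{-1})$; absorbing constants yields precisely the stated condition $m\ge C\al^2\zeta^{-2}\Del_\cM^4\max\{D_\cM,\log(\eta^{-1})\}$. I do not anticipate a genuine obstacle here: the only mild subtlety is keeping the two notions of diameter/metric straight — the hypotheses are phrased intrinsically (geodesic doubling dimension and geodesic diameter) while the additive-RIP corollary is phrased in the ambient metric — and the domination $\|\cdot\|_2\le d_\cM$ makes the transfer go through cleanly in the one direction we need. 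One should also note $\cM$ need not be compact a priori, but finite doubling dimension together with finite geodesic diameter forces $\cM$ to be totally bounded, so all covering numbers and $\ga_2$-quantities are finite and the argument is rigorous.
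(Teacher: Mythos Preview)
Your proposal is correct and follows essentially the same approach as the paper: bound $\ga_2(\cM,d_2)$ via the entropy integral, transfer to the geodesic metric using $d_2\le d_{\cM}$, apply the doubling-dimension covering estimate, and then invoke the additive-error case of Corollary~\ref{cor:RIPepsIsom}. The paper's proof is slightly terser but otherwise identical in substance.
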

\begin{proof}
Since $d_2\leq d_{\cM}$, we find using (\ref{eqn:gammaFunEstEntInt})
\begin{align*}
\ga_2(\cM,d_2) & \lesssim \int_0^{\Del_{\cM}} \log^{1/2} N(\cM,d_{\cM},\eps) \ d\eps \\
& = \Del_{\cM} \int_0^1 \log^{1/2} N(\cM,d_{\cM},\eps\Del_{\cM}) \ d\eps \\
& \leq \Del_{\cM} D_{\cM}^{1/2} \int_0^1 \log^{1/2}(c/\eps) \ d\eps \lesssim \Del_{\cM}D_{\cM}^{1/2},
\end{align*}
where in the final step we used (\ref{eqn:intEstimate}). The result is now immediate from the third statement in Corollary~\ref{cor:RIPepsIsom}.
\end{proof}
If $\ga$ is a $C^1$-curve in $\R^n$, then it has doubling dimension $2$ with respect to the geodesic distance. Therefore, Proposition~\ref{pro:manifoldDD} implies in this case that with probability $1-\eta$
$$\|x-y\|_2^2 - \zeta \leq \|\Phi(x-y)\|_2^2 \leq \|x-y\|_2^2 + \zeta, \qquad \mathrm{for \ all} \ x,y\in \ga,$$
whenever
$$m\geq C\al^2\zeta^{-2}\Del_{\ga}^4 \max\{2,\log(\eta^{-1})\}.$$

\subsection{Preservation of ambient distances: multiplicative error}

We will now investigate under which conditions a subgaussian map $\Phi$ on $\cM$ preserves pairwise ambient distances up to a small multiplicative error. These maps also approximately preserve several other properties of the manifold, such as its volume and the length and curvature of curves in the manifold (see \cite[Section 4.2]{BaW09} for a discussion). Results in this direction were first obtained in \cite{BaW09} and improved upon in \cite{Cla08,EfW13}.\par
Let us first observe an embedding result for manifolds with a low linearization dimension, which substantially improves \cite[Theorem 8]{AHY13}.
\begin{corollary}
For every $1\leq i\leq k$ let $\cM_i$ be smooth submanifold of $\R^n$ with linearization dimension $K_i$. Set $\cM=\cup_{i=1}^k \cM_i$ and $K=\max_i K_i$. Let $\Phi:\Om\ti\R^n\rightarrow \R^m$ be a subgaussian map. Then there is a constant $C>0$ such that for every $\eps,\eta>0$ we have $\bP(\eps_{\cM,\Phi}\geq \eps)\leq \eta$ if
\begin{equation*}
m \geq C\al^2\eps^{-2} \ \max\{\log k + K, \log(\eta^{-1})\}.
\end{equation*}
\end{corollary}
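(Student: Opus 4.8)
The plan is to reduce the statement to Corollary~\ref{cor:covDim} through the identity $\eps_{\cM,\Phi}=\del_{\cM_c,\Phi}$, where $\cM_c=\{x-y\ :\ x,y\in\cM\}$ is the chord set; the linearization hypothesis will let us cover $\cM_c$ by $k^2$ subspaces of dimension $O(K)$. First I would unpack the hypothesis: recall from \cite{AHY13} that $\cM_i$ having linearization dimension $K_i$ means in particular that $\cM_i$ lies in an affine subspace of $\R^n$ of dimension $K_i$, equivalently that the linear span $U_i$ of the chords $\{x-y\ :\ x,y\in\cM_i\}$ satisfies $\dim U_i\leq K_i\leq K$.

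The key observation is a base-point shift. Fix $p_i\in\cM_i$ for each $i$; for $x\in\cM_i$, $y\in\cM_j$ write $x-y=(x-p_i)+(p_i-p_j)-(y-p_j)$, so that $x-y$ lies in the subspace $V_{ij}:=U_i+U_j+\R(p_i-p_j)$ of dimension at most $2K+1$. Since $\cM=\bigcup_{i=1}^k\cM_i$, it follows that $\cM_c=\bigcup_{i,j=1}^k(\cM_i-\cM_j)\subseteq\bigcup_{i,j=1}^k V_{ij}$, a union of $k^2$ subspaces of $\R^n$, each of dimension at most $2K+1$.

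To finish, note that $\Phi$, being a subgaussian map on all of $\R^n$, restricts to a subgaussian map on the normalized chords, and that $\del_{\cM_c,\Phi}\leq\del_{\bigcup_{i,j}V_{ij},\Phi}$ by monotonicity of the restricted isometry constant in its defining set. The subspace case of Corollary~\ref{cor:covDim}, applied to the union of the $k^2$ subspaces $V_{ij}$, gives $\bP(\del_{\bigcup_{i,j}V_{ij},\Phi}\geq\eps)\leq\eta$ whenever $m\geq C\al^2\eps^{-2}\max\{\log(k^2)+(2K+1),\log(\eta^{-1})\}$; absorbing absolute constants and using $\eps_{\cM,\Phi}=\del_{\cM_c,\Phi}$ yields the claim (one may assume $\eps\leq\tfrac{1}{2}$, the case of larger $\eps$ following from $\eps=\tfrac{1}{2}$).

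The two cited ingredients do all the work, and the only substantive step is the base-point shift, which keeps each cross-chord subspace $V_{ij}$ of dimension $O(K)$ rather than letting it grow with $k$. The main obstacle I anticipate is purely a matter of bookkeeping: pinning down the precise meaning of ``linearization dimension'' from \cite{AHY13} and verifying that the argument covers chords between \emph{distinct} component manifolds $\cM_i,\cM_j$, not only chords within a single $\cM_i$.
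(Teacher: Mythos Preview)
Your proof is correct and follows essentially the same route as the paper. The paper first embeds each $\cM_i$ in a linear subspace $L_i$ of dimension $K_i+1$ (via \cite[Lemma~3]{AHY13}) and then appeals to the second statement of Corollary~\ref{cor:covDim} together with the remark following it (whose proof, as in Theorem~\ref{thm:embedUnion}, amounts to passing to the $k^2$ pairwise sums $L_i+L_j$); your base-point shift is exactly an explicit unpacking of that step, yielding the same $k^2$ subspaces of dimension $O(K)$.
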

\begin{proof}
By \cite[Lemma 3]{AHY13}, $\cM_i$ is contained in an affine subspace of dimension $K_i$ and therefore in a linear subspace of dimension $K_i+1$. The result is now immediate from the second statement in Corollary~\ref{cor:covDim}.
\end{proof}
To derive the main results of this section, Theorems~\ref{thm:manifoldIota} and \ref{thm:manifoldReach}, we apply Corollary~\ref{cor:RIPepsIsom} and estimate the $\ga_2$-functional of the set $\cM_{nc}$ of normalized chords. We use (\ref{eqn:gammaFunEstEntInt}), i.e.,
$$\ga_2(\cM_{nc},d_2)\lesssim \int_0^1 \log^{1/2} N(\cM_{nc},d_2,u) \ du$$
and estimate the covering numbers of $\cM_{nc}$. The idea behind the covering number estimates, which is already implicit in \cite{Cla08}, is to divide the set of normalized chords into two categories. Firstly, one considers normalized chords corresponding to $x,y\in \cM$ which are `close' in the Euclidean metric. In \cite{Cla08}, these chords are called the `short chords', which should be taken as shorthand for `the normalized chords corresponding to short chords'. Let
$$\Ch(x,y) = \frac{y-x}{\|y-x\|_2}$$
denote the normalized chord from $x$ to $y$. Since $\Ch(x,y)$ converges to a unit tangent vector in $T_x\cM$ as $y$ approaches $x$, it is clear that this part of the covering number estimate requires good control of the `intrinsic dimension' of the tangent bundle of $\cM$. Secondly, one considers normalized chords corresponding to $x$ and $y$ which are `far apart' in Euclidean distance (the `long chords' in the terminology of \cite{Cla08}). These chords can be approximated well by chords $\Ch(a,b)$, where $a$ and $b$ are taken from a covering of $\cM$ itself, see Lemma~\ref{lem:longChords} below. To be able to decide whether two points are `close' or `far apart', we need to quantify how well we can approximate a normalized chord by a tangent vector. For this purpose we introduce the following parameter.
\begin{definition}
\label{def:iota}
If $\cM$ is a $C^1$-submanifold of $\R^n$, then we let $\iota(\cM)$ be the smallest constant $0<\iota\leq\infty$ satisfying
$$\|\Ch(x_1,x_2) - P_{x_1}\Ch(x_1,x_2)\|_2 \leq \iota\|x_1-x_2\|_2,\qquad \mathrm{for \ all} \ x_1,x_2\in\cM.$$
\end{definition}
In the proof of Theorem~\ref{thm:manifoldIota} we use the following observation, which is implicitly used in \cite{Cla08}. It is readily proven using the triangle and reverse triangle inequalities.
\begin{lemma}
\label{lem:longChords}
If $x_1,x_2 \in \R^n$ satisfy $\|x_1-x_2\|_2\geq t>0$, then
$$\|\Ch(x_1,x_2) - \Ch(y_1,y_2)\|_2 \leq 2t^{-1}(\|x_1-y_1\|_2 + \|x_2-y_2\|_2).$$
\end{lemma}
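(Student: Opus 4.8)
The plan is to reduce the claim to the elementary fact that for any two nonzero vectors $u,v\in\R^n$ one has
$$\Big\|\frac{u}{\|u\|_2} - \frac{v}{\|v\|_2}\Big\|_2 \leq \frac{2\|u-v\|_2}{\|u\|_2},$$
applied with $u=x_2-x_1$ and $v=y_2-y_1$ (note that $\Ch(x_1,x_2)=u/\|u\|_2$ points from $x_1$ to $x_2$, so this is the correct substitution).

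First I would prove the displayed inequality. Write
$$\frac{u}{\|u\|_2} - \frac{v}{\|v\|_2} = \frac{u-v}{\|u\|_2} + v\Big(\frac{1}{\|u\|_2} - \frac{1}{\|v\|_2}\Big),$$
take Euclidean norms, and use the triangle inequality together with $\|v\|_2\cdot\big|\tfrac{1}{\|u\|_2}-\tfrac{1}{\|v\|_2}\big| = \tfrac{|\,\|v\|_2-\|u\|_2\,|}{\|u\|_2}$. By the reverse triangle inequality $|\,\|v\|_2-\|u\|_2\,|\leq\|u-v\|_2$, so both terms are bounded by $\|u-v\|_2/\|u\|_2$, giving the factor $2$.

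Next I would substitute back: $\|u\|_2=\|x_1-x_2\|_2\geq t$ by hypothesis, hence $1/\|u\|_2\leq 1/t$, while $\|u-v\|_2 = \|(x_2-y_2)-(x_1-y_1)\|_2 \leq \|x_1-y_1\|_2+\|x_2-y_2\|_2$ by the triangle inequality. Combining these two bounds with the displayed inequality yields
$$\|\Ch(x_1,x_2)-\Ch(y_1,y_2)\|_2 \leq \frac{2}{t}\big(\|x_1-y_1\|_2+\|x_2-y_2\|_2\big),$$
which is the assertion. There is no real obstacle here; the only point requiring a moment's care is keeping the orientation of the chords consistent (so that the numerators line up as $x_2-x_1$ versus $y_2-y_1$) and noting that the lower bound $t$ is only assumed for $\|x_1-x_2\|_2$, which is exactly the norm that appears in the denominator after the substitution.
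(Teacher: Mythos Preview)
Your proof is correct and follows exactly the approach the paper indicates: the paper does not spell out the argument but simply notes that the lemma ``is readily proven using the triangle and reverse triangle inequalities,'' which is precisely what you do via the decomposition $\tfrac{u}{\|u\|_2}-\tfrac{v}{\|v\|_2}=\tfrac{u-v}{\|u\|_2}+v\bigl(\tfrac{1}{\|u\|_2}-\tfrac{1}{\|v\|_2}\bigr)$.
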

\begin{theorem}
\label{thm:manifoldIota}
Let $\cM$ be a $K$-dimensional $C^1$-submanifold of $\R^n$. Let $\Phi:\Om\ti\R^n\rightarrow \R^m$ be a subgaussian map. Suppose that $\cM$ has covering dimensions $K_2$ and $K_{\Fin}$ with respect to $d_2$ and $d_{\Fin}$, respectively. Then there is a constant $C>0$ such that for any $0<\eps,\eta<1$ we have $\bP(\eps_{\cM,\Phi}\geq \eps)\leq \eta$ provided that
$$m\geq C \al^2\eps^{-2} \ \max\{K_2\log_+(\iota(\cM)\Del_{d_2}(\cM)) + K_{\Fin} + K, \log(\eta^{-1})\}.$$
\end{theorem}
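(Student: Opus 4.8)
The plan is to apply the first statement of Corollary~\ref{cor:RIPepsIsom} with $\cP=\cM$, so that it suffices to bound $\ga_2(\cM_{nc},d_2)$, the $\ga_2$-functional of the set of normalized chords, and then invoke the entropy-integral estimate (\ref{eqn:gammaFunEstEntInt}):
$$\ga_2(\cM_{nc},d_2) \lesssim \int_0^1 \log^{1/2} N(\cM_{nc},d_2,u) \ du.$$
Thus everything reduces to a covering-number estimate for $\cM_{nc}$. Following the idea of \cite{Cla08} described above, I would fix a scale $t>0$ (to be optimized) and split $\cM_{nc}$ into the \emph{short chords} $\cS_t = \{\Ch(x,y) : x,y\in\cM, \ 0<\|x-y\|_2 < t\}$ and the \emph{long chords} $\cL_t = \{\Ch(x,y) : x,y\in\cM, \ \|x-y\|_2 \geq t\}$, and cover each piece separately; then $N(\cM_{nc},d_2,u) \leq N(\cS_t,d_2,u) + N(\cL_t,d_2,u)$.

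For the long chords, I would use Lemma~\ref{lem:longChords}: if $(a_j)_{j=1}^{N}$ is a $(tu/4)$-net for $\cM$ in $d_2$, with $N = N(\cM,d_2,tu/4)$, then for any $\Ch(x,y)\in\cL_t$ choosing $a_i,a_j$ with $\|x-a_i\|_2,\|y-a_j\|_2 \leq tu/4$ gives $\|\Ch(x,y) - \Ch(a_i,a_j)\|_2 \leq 2t^{-1}(tu/4 + tu/4) = u$, so $\{\Ch(a_i,a_j)\}_{i,j}$ is a $u$-net for $\cL_t$ and $N(\cL_t,d_2,u) \leq N^2 = N(\cM,d_2,tu/4)^2$. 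Using that $\cM$ has covering dimension $K_2$ in $d_2$ this is at most $(c'\Del_{d_2}(\cM)/(tu))^{2K_2}$ for appropriate constants, contributing $\log^{1/2} N(\cL_t,d_2,u) \lesssim K_2^{1/2}\log^{1/2}(c'\Del_{d_2}(\cM)/(tu))$ to the integrand. For the short chords, the point is that $\Ch(x,y)$ is within $\iota(\cM)\|x-y\|_2 \leq \iota(\cM)t$ of its projection $P_x\Ch(x,y)$, which is a unit vector in $T_x\cM$; hence $\cS_t$ lies in the $\iota(\cM)t$-neighborhood of the unit tangent bundle $\{v\in T_x\cM : \|v\|_2=1, x\in\cM\}$. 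I would cover this bundle by first taking an $r$-net for $\cM$ in the Finsler metric $d_{\Fin}$ (of cardinality controlled by $K_{\Fin}$) — so that nearby base points have nearly-aligned tangent spaces, the projections $P_x, P_{x'}$ differing by at most $r$ in operator norm — and then, over each net point, taking an $r$-net for the unit sphere of the at-most-$K$-dimensional space $T_{x'}\cM$ (cardinality $\lesssim (3/r)^K$ by Example~\ref{exa:covDimSubspace}). Choosing $t \simeq u/\iota(\cM)$ so that the $\iota(\cM)t$-perturbation is absorbed into the net radius $\simeq u$, this yields $N(\cS_t,d_2,u) \lesssim N(\cM,d_{\Fin},c''u)\cdot(c''/u)^K \lesssim (\Del_{\Fin}(\cM)/u)^{K_{\Fin}}(c''/u)^K$ (and for $u$ bounded below one can simply use that $\cM_{nc}$ has bounded diameter), contributing $\log^{1/2} N(\cS_t,d_2,u) \lesssim K_{\Fin}^{1/2}\log^{1/2}(c''/u) + K^{1/2}\log^{1/2}(c''/u)$.

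Plugging both contributions into the entropy integral, using $t\simeq u/\iota(\cM)$ so that $\Del_{d_2}(\cM)/(tu) \simeq \iota(\cM)\Del_{d_2}(\cM)/u^2$, and applying the elementary bound (\ref{eqn:intEstimate}) to each resulting term $\int_0^1 \log^{1/2}(c/u)\,du$ and $\int_0^1\log^{1/2}(\iota(\cM)\Del_{d_2}(\cM)/u^2)\,du \lesssim \log_+^{1/2}(\iota(\cM)\Del_{d_2}(\cM)) + 1$, I would obtain
$$\ga_2(\cM_{nc},d_2) \lesssim K_2^{1/2}\log_+^{1/2}(\iota(\cM)\Del_{d_2}(\cM)) + K_{\Fin}^{1/2} + K^{1/2},$$
hence $\ga_2^2(\cM_{nc},d_2) \lesssim K_2\log_+(\iota(\cM)\Del_{d_2}(\cM)) + K_{\Fin} + K$. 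Feeding this into Corollary~\ref{cor:RIPepsIsom} gives exactly the stated bound on $m$. The main obstacle I expect is the short-chord covering: one has to carefully manage the interplay between the scale $t$ at which a chord is declared "short" and the net radius $u$, making sure the $\iota(\cM)t$ deviation from the tangent bundle, the $d_{\Fin}$-approximation error between tangent spaces, and the sphere-net error all combine to at most a constant multiple of $u$ while keeping the cardinalities under control — and handling the low-$u$ range where these volumetric bounds would blow up by instead using the bounded diameter of $\cM_{nc}$. Uniform control over all of $\cM$ (as opposed to working near a single point) is what forces the appearance of the two covering dimensions $K_2$ and $K_{\Fin}$ in addition to the intrinsic dimension $K$.
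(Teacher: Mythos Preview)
Your proposal is correct and follows essentially the same route as the paper's proof: split $\cM_{nc}$ into short and long chords at scale $t\simeq u/\iota(\cM)$, cover the long chords via Lemma~\ref{lem:longChords} and a $d_2$-net at scale $\simeq u^2/\iota(\cM)$, cover the short chords via a $d_{\Fin}$-net on base points combined with a $K$-dimensional sphere net in each tangent space, and then feed the resulting entropy integral into Corollary~\ref{cor:RIPepsIsom}. Two minor slips: you want the \emph{second} statement of Corollary~\ref{cor:RIPepsIsom} (the one for $\eps_{\cP,\Phi}$ and $\cP_{nc}$), not the first; and $P_x\Ch(x,y)$ is not a unit vector but only has norm in $[1-\iota t,1]$ --- this is harmless, since the paper likewise uses net points of the form $P_y z$ (not normalized) and the defect is absorbed into the $u$-budget.
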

\begin{proof}
Let $0<a,b,c,t<\infty$ be parameters to be determined later. Let $N_2$ be an $a$-net of $\cM$ with respect to the Euclidean distance $d_2$ and let $N_{\Fin}$ be a $b$-net for $\cM$ with respect to $d_{\Fin}$. Finally, for any $y\in N_{\Fin}$ let $N_y$ be a $c$-net for the unit sphere $\cS_{\R^n}$ in $\R^n$ with respect to the induced semi-metric $d_y(z_1,z_2) := \|P_y(z_1-z_2)\|_2$.\par
Suppose first that $\|x_1-x_2\|_2>t$. Let $y_1,y_2\in N_2$ be such that $\|x_1-y_1\|_2<a$ and $\|x_2-y_2\|_2<a$. By Lemma~\ref{lem:longChords},
\begin{equation*}
\|\Ch(x_1,x_2) - \Ch(y_1,y_2)\|_2 \leq 2t^{-1}(\|x_1-y_2\|_2 + \|x_2-y_2\|_2) \leq 4t^{-1}a.
\end{equation*}
Suppose now that $\|x_1-x_2\|_2 \leq t$. Pick $y\in N_{\Fin}$ such that $\|P_{x_1}-P_y\|<b$ and subsequently $z\in N_y$ such that $\|P_y(\Ch(x_1,x_2)-z)\|_2<c$. Letting $\iota$ be as in Definition~\ref{def:iota}, we find
\begin{align*}
& \|\Ch(x_1,x_2) - P_yz\|_2 \\
& \qquad \leq \|\Ch(x_1,x_2) - P_{x_1}\Ch(x_1,x_2)\|_2 + \|P_{x_1}\Ch(x_1,x_2) - P_{y}\Ch(x_1,x_2)\|_2 \\
& \qquad \qquad \qquad + \|P_y(\Ch(x_1,x_2) - z)\|_2 \\
& \qquad \leq \iota t+b+c.
\end{align*}
Now let $0<u\leq 1$. From our estimates we see that if we pick $t=u/(3\iota)$, $a=u^2/(12\iota)$, $b=u/3$ and $c=u/3$, then
$$\{\Ch(y_1,y_2) \ : \ y_1,y_2 \in N_2\} \cup \{P_yz \ : \ y \in N_{\Fin}, \ z\in N_y\}$$
yields a $u$-net for $\cM_{nc}$ with respect to $d_2$. Since for every $y\in N_{\Fin}$ and $v>0$,
$$N(\cS_{\R^n},d_y,v) \leq N(B_{\R^K},d_2,v) \leq (1+2v^{-1})^K,$$
we obtain
\begin{equation}
\label{eqn:covNumDec}
N(\cM_{nc},d_2,u) \leq N^2\Big(\cM,d_2,\frac{u^2}{12\iota}\Big) + N\Big(\cM,d_{\Fin},\frac{u}{3}\Big)\Big(1+\frac{6}{u}\Big)^K.
\end{equation}
By (\ref{eqn:gammaFunEstEntInt}),
\begin{align}
\label{eqn:gammaMncEst}
\ga_2(\cM_{nc},d_2) & \lesssim \int_0^1 \log^{1/2} N(\cM_{nc},d_2,u) \ du \nonumber \\
& \leq 2\sqrt{2}\int_0^1 \log^{1/2} N\Big(\cM,d_2,\frac{u^2}{12\iota}\Big) \ du \nonumber\\
& \qquad + 2\int_0^1 \log^{1/2} N\Big(\cM,d_{\Fin},\frac{u}{3}\Big) \ du + 2\sqrt{K}\int_0^1 \log^{1/2}\Big(1+\frac{6}{u}\Big) \ du \nonumber\\
& \leq 2\sqrt{2 K_2}\int_0^1 \log_+^{1/2}\Big(\frac{12\iota\Del_{d_2}(\cM)}{u^2}\Big) \ du \nonumber\\
& \qquad + 2\sqrt{K_{\Fin}}\int_0^1 \log^{1/2}\Big(\frac{3}{u}\Big) \ du + \sqrt{K}\int_0^1 \log^{1/2}\Big(1+\frac{6}{u}\Big) \ du \nonumber\\
& \lesssim \sqrt{K_2}\log_+^{1/2}(\iota\Del_{d_2}(\cM)) + \sqrt{K_{\Fin}} + \sqrt{K},
\end{align}
where in the final step we used (\ref{eqn:intEstimate}). The result now follows from the second statement in Corollary~\ref{cor:RIPepsIsom}.
\end{proof}
We conclude by proving a result related to \cite{BaW09,EfW13} using some tools from \cite{EfW13}, which can in turn be traced back to \cite{NSW08}. Recall that the \emph{reach} $\tau(\cM)$ of a smooth submanifold $\cM$ of $\R^n$ is the smallest $\tau>0$ such that some point of $\R^n$ at distance $\tau$ from $\cM$ has two distinct points of $\cM$ as closest points in $\cM$.
\begin{lemma}
\label{lem:shortChords}
If $\cM$ has reach $\tau$, then $\iota(\cM)\leq 2\tau^{-1}$. Moreover, for any $x_1,x_2 \in \cM$,
$$d_{\Fin}(x_1,x_2) \leq 2\sqrt{2}\tau^{-1/2}\|x_1-x_2\|_2^{1/2}.$$
\end{lemma}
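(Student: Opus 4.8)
The plan is to split the statement into its two assertions and derive each from two standard facts about submanifolds of positive reach, both collected in \cite{EfW13} and ultimately going back to \cite{NSW08}. The facts are: (i) the \emph{tangent-plane distance estimate}, that for all $x_1,x_2\in\cM$ one has $\|(x_2-x_1)-P_{x_1}(x_2-x_1)\|_2\le \|x_1-x_2\|_2^2/(2\tau)$; and (ii) a \emph{tangent-space variation estimate}, $\|P_{x_1}-P_{x_2}\|\le c\,\|x_1-x_2\|_2/\tau$ whenever $\|x_1-x_2\|_2\le\tau/2$, with $c\le 4$ an absolute constant. Fact (ii) itself would be obtained by combining the bound $\|P_{x_1}-P_{x_2}\|\le d_{\cM}(x_1,x_2)/\tau$ --- valid because along a unit-speed geodesic of $\cM$ the tangent plane rotates at rate at most $1/\tau$, as $\cM$ has principal curvatures bounded by $1/\tau$ (\cite{NSW08}) --- with the comparison $d_{\cM}(x_1,x_2)\lesssim\|x_1-x_2\|_2$ of geodesic and ambient distance for nearby points.

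Given (i), the first assertion is immediate: dividing the inequality in (i) by $\|x_1-x_2\|_2$ and using linearity of $P_{x_1}$ yields $\|\Ch(x_1,x_2)-P_{x_1}\Ch(x_1,x_2)\|_2\le \|x_1-x_2\|_2/(2\tau)\le 2\tau^{-1}\|x_1-x_2\|_2$, which is precisely the defining inequality of Definition~\ref{def:iota} with constant $2\tau^{-1}$; hence $\iota(\cM)\le 2\tau^{-1}$.

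For the second assertion I would first record the crude universal bound $d_{\Fin}(x_1,x_2)=\|P_{x_1}-P_{x_2}\|\le 1$: since $T_{x_1}\cM$ and $T_{x_2}\cM$ both have dimension $K$, the operator $P_{x_1}-P_{x_2}$ is self-adjoint with nonzero eigenvalues $\pm\sin\theta_j$, the $\theta_j$ being the principal angles between these subspaces (see \cite{Ste73}). Then I would split on the size of $\|x_1-x_2\|_2$. If $\|x_1-x_2\|_2\ge\tau/2$, then $2\sqrt 2\,\tau^{-1/2}\|x_1-x_2\|_2^{1/2}\ge 2\sqrt 2\cdot 2^{-1/2}=2\ge 1\ge d_{\Fin}(x_1,x_2)$, and the claimed inequality holds. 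If $\|x_1-x_2\|_2<\tau/2$, set $\rho:=\|x_1-x_2\|_2/\tau<\tfrac12$; then by (ii), $d_{\Fin}(x_1,x_2)\le c\rho=c\rho^{1/2}\cdot\rho^{1/2}\le c\cdot 2^{-1/2}\rho^{1/2}\le 2\sqrt 2\,\rho^{1/2}=2\sqrt 2\,\tau^{-1/2}\|x_1-x_2\|_2^{1/2}$, where the last step uses $c\le 4$. Together the two cases give the bound for all $x_1,x_2\in\cM$.

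The geometric inputs (i) and (ii) I would simply quote, so the only real work is the constant bookkeeping in the last part: choosing the cutoff (here $\tau/2$) so that the trivial estimate $d_{\Fin}\le 1$ takes over exactly where the linear estimate (ii) stops beating $\rho^{1/2}$, and verifying that the absolute constant produced by the geodesic/ambient comparison together with the curvature bound is at most $4$ on the range $\|x_1-x_2\|_2\le\tau/2$. That is the step I would be most careful about, though it is routine rather than deep.
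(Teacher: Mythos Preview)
Your proposal is correct and follows essentially the same strategy as the paper: both assertions are reduced to geometric facts about manifolds of positive reach taken from \cite{EfW13,NSW08}, with a case split at $\|x_1-x_2\|_2=\tau/2$ so that the trivial bound handles far-apart points.

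There is one small difference worth noting. For the first assertion you invoke the tangent-plane distance estimate (your fact (i)) as a statement valid for \emph{all} pairs $x_1,x_2\in\cM$, which indeed it is (this is Proposition~6.3 in \cite{NSW08}); consequently no case split is needed and you actually obtain the sharper bound $\iota(\cM)\le (2\tau)^{-1}$, though you then relax this to $2\tau^{-1}$. The paper instead cites \cite[Lemma~2]{EfW13}, applied only when $\|x_1-x_2\|_2\le\tau/2$, and handles the complementary range by the trivial estimate $\|\Ch(x_1,x_2)-P_{x_1}\Ch(x_1,x_2)\|_2\le 1$; this is why the constant $2\tau^{-1}$ rather than $(2\tau)^{-1}$ appears. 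For the second assertion the paper simply cites \cite[Lemma~9]{EfW13} for the close-points case, whereas you spell out the route via the linear bound $d_{\Fin}\le c\|x_1-x_2\|_2/\tau$ and then convert to the square-root bound using $\rho<1/2$; this is exactly the content of that lemma, and your constant bookkeeping (needing $c\le 4$, which follows from $d_{\cM}\le 2\|x_1-x_2\|_2$ for $\|x_1-x_2\|_2\le\tau/2$) is sound.
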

\begin{proof}
Suppose first that $\|x_1-x_2\|_2 \leq \tau/2$. Then,
\begin{align*}
\|\Ch(x_1,x_2) - P_{x_1}\Ch(x_1,x_2)\|_2 & = \sin(\angle(\Ch(x_1,x_2),P_{x_1}\Ch(x_1,x_2))) \\
& = \sin(\angle(x_2-x_1,P_{x_1}(x_2-x_1))) \leq \frac{\|x_1-x_2\|_2}{2\tau},
\end{align*}
where the final inequality follows from \cite[Lemma 2]{EfW13}. On the other hand, if $\|x_1-x_2\|_2>\tau/2$, then trivially,
$$\|\Ch(x_1,x_2) - P_{x_1}\Ch(x_1,x_2)\|_2\leq 1 \leq 2\tau^{-1}\|x_1-x_2\|_2.$$
The second statement for $x_1,x_2$ satisfying $\|x_1-x_2\|_2<\tau/2$ follows from \cite[Lemma 9]{EfW13} and is trivial if $\|x_1-x_2\|_2\geq\tau/2$.
\end{proof}
\begin{theorem}
\label{thm:manifoldReach}
Let $\cM$ be a $K$-dimensional $C^{\infty}$-submanifold of $\R^n$ with reach $\tau$ and covering dimension $K_2$ with respect to $d_2$. Let $\Phi:\Om\ti\R^n\rightarrow \R^m$ be a subgaussian map. Then there is a constant $C>0$ such that for any $0<\eps,\eta<1$ we have $\bP(\eps_{\cM,\Phi}\geq \eps)\leq \eta$ provided that
$$m\geq C \al^2\eps^{-2} \ \max\{K_2\log_+(\tau^{-1}\Del_{d_2}(\cM)) + K, \log(\eta^{-1})\}.$$
If $\cM$ has volume $V_{\cM}$, then $\bP(\eps_{\cM,\Phi}\geq \eps)\leq \eta$ if
$$m\geq C \al^2\eps^{-2} \ \max\{K\log_+(K\tau^{-1}) + K + \log_+(V_{\cM}), \log(\eta^{-1})\}.$$
\end{theorem}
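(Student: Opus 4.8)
The plan is to deduce both assertions from the second statement of Corollary~\ref{cor:RIPepsIsom}, so that the whole task reduces to bounding the $\ga_2$-functional of the set $\cM_{nc}$ of normalized chords of $\cM$ with respect to $d_2$. Rather than invoking Theorem~\ref{thm:manifoldIota} as a black box --- which is awkward, since the second assertion assumes no covering dimension at all and in the first the induced Finsler covering dimension would come with an uncontrolled base covering --- I will reuse the covering-number decomposition~(\ref{eqn:covNumDec}) established in its proof. Recall that this bounds $N(\cM_{nc},d_2,u)$ by $N^2(\cM,d_2,u^2/(12\iota(\cM)))$ together with $N(\cM,d_{\Fin},u/3)(1+6u^{-1})^K$. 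The first step is to apply Lemma~\ref{lem:shortChords} to express both of these in terms of the plain Euclidean covering numbers of $\cM$: from $\iota(\cM)\le 2\tau^{-1}$ we get $N(\cM,d_2,u^2/(12\iota(\cM)))\le N(\cM,d_2,u^2\tau/24)$, and from the H\"older-type estimate $d_{\Fin}(x_1,x_2)\le 2\sqrt2\,\tau^{-1/2}\|x_1-x_2\|_2^{1/2}$ one checks that any $\tfrac{1}{72}u^2\tau$-net of $\cM$ in $d_2$ is automatically a $\tfrac{1}{3}u$-net of $\cM$ in $d_{\Fin}$, whence $N(\cM,d_{\Fin},u/3)\le N(\cM,d_2,u^2\tau/72)$. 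Inserting these into~(\ref{eqn:covNumDec}) shows that, up to a universal constant, $\log N(\cM_{nc},d_2,u)$ is dominated by $\log N(\cM,d_2,u^2\tau/72)+K\log(1+6u^{-1})$, the second term contributing only $\sqrt K$ to the entropy integral via~(\ref{eqn:intEstimate}); so everything hinges on the Euclidean covering numbers of $\cM$.

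It is here that the two assertions diverge. For the first one I will simply insert the hypothesis that $\cM$ has covering dimension $K_2$ with respect to $d_2$, i.e.\ $N(\cM,d_2,v)\le N_0(c\Del_{d_2}(\cM)/v)^{K_2}$, take $v=u^2\tau/72$, and integrate using~(\ref{eqn:gammaFunEstEntInt}) and~(\ref{eqn:intEstimate}) exactly as in the derivation of~(\ref{eqn:gammaMncEst}); this yields $\ga_2(\cM_{nc},d_2)\lesssim \sqrt{K_2}\,\log_+^{1/2}(\tau^{-1}\Del_{d_2}(\cM))+\sqrt{K_2}+\sqrt K$, and the stated lower bound on $m$ follows from Corollary~\ref{cor:RIPepsIsom} after absorbing universal and lower-order terms. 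For the second assertion I will instead bound $N(\cM,d_2,\eps)$ directly from the volume: a maximal $\eps$-separated subset of $\cM$ is an $\eps$-net, and since $d_{\cM}\ge d_2$ it is $\eps$-separated in the geodesic metric as well, so the corresponding geodesic balls of radius $\eps/2$ are pairwise disjoint; combining this with the standard lower bound $\mathrm{vol}(B_{\cM}(x,r))\gtrsim \omega_K r^K$ for $r\lesssim\tau$ (with $\omega_K$ the volume of the Euclidean unit $K$-ball; see \cite{NSW08,EfW13}) gives $N(\cM,d_2,\eps)\lesssim V_{\cM}\,\omega_K^{-1}(\eps/2)^{-K}$ for $\eps\lesssim\tau$, and a quantity independent of $\eps$ for $\eps\gtrsim\tau$. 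Since $\omega_K^{-1/K}\simeq\sqrt K$ by Stirling, this is of the form $N(\cM,d_2,\eps)\lesssim V_{\cM}(C\sqrt K/\eps)^K$ for $\eps\lesssim\tau$, i.e.\ effectively a covering-dimension-$K$ bound whose base covering is of order $V_{\cM}$. Plugging $\eps=u^2\tau/72$ into the estimate from the first step and integrating in the same way produces $\ga_2^2(\cM_{nc},d_2)\lesssim \log_+V_{\cM}+K\log_+(K\tau^{-1})+K$, which is precisely the complexity in the second bound, and a final appeal to Corollary~\ref{cor:RIPepsIsom} completes the argument.

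The chaining and the elementary integral estimates are essentially a repetition of those already carried out for Theorem~\ref{thm:manifoldIota}, so the only genuinely new inputs are the two consequences of Lemma~\ref{lem:shortChords} (which is why the stronger $C^{\infty}$ assumption and the reach appear) and, for the second assertion, the passage from volume to covering number. I expect the latter to need the most care: one must make the dependence on the intrinsic dimension explicit --- the relation $\omega_K^{-1/K}\simeq\sqrt K$ is what converts the volume normalization $\omega_K^{-1}$ into a $K\log_+ K$ term in the entropy --- handle the range $\eps\gtrsim\tau$ separately (the geodesic-ball volume lower bound is only available up to radius of order $\tau$), and keep track of the scale at which each covering number is evaluated ($u^2\tau$ versus $u$), since this is what decides whether the final bound contains $\log_+(K\tau^{-1})$ or some power of $\tau^{-1}$. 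None of these steps is deep, but they are where a slip would most easily occur.
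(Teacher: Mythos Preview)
Your proposal is correct and follows essentially the same route as the paper: both proofs reuse the decomposition~(\ref{eqn:covNumDec}), convert the $\iota(\cM)$- and $d_{\Fin}$-terms into Euclidean covering numbers of $\cM$ at scale $\sim u^2\tau$ via the two parts of Lemma~\ref{lem:shortChords}, and then feed either the covering-dimension hypothesis or a volume-based bound (the paper cites \cite[Lemma~11]{EfW13}, which is exactly the geodesic-ball packing argument you sketch, including the $\omega_K^{-1}\lesssim(K/c)^{K/2}$ step) into the entropy integral before invoking Corollary~\ref{cor:RIPepsIsom}. One minor simplification: your caution about handling the range $\eps\gtrsim\tau$ separately is unnecessary, since after the substitution the Euclidean covering numbers are only ever evaluated at radius $v=u^2\tau/72\le\tau/72$ for $u\in(0,1]$, which is automatically within the regime where the geodesic-ball volume lower bound applies.
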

\begin{proof}
By the second part of Lemma~\ref{lem:shortChords}, if $N$ is a $v$-net of $\cM$ with respect to $d_2$, then it is also a $2\sqrt{2}\tau^{-1/2}\sqrt{v}$-net with respect to $d_{\Fin}$. Hence,
$$N(\cM,d_{\Fin},2\sqrt{2}\tau^{-1/2}\sqrt{v}) \leq N(\cM,d_2,v),$$
which implies that for any $u>0$,
$$N(\cM,d_{\Fin},u)\leq N\Big(\cM,d_2,\frac{\tau}{8}u^2\Big).$$
Combining this with our estimate (\ref{eqn:covNumDec}) in the proof of Theorem~\ref{thm:manifoldIota}, we obtain
\begin{align}
\label{eqn:covNumEstReach}
N(\cM_{nc},d_2,u) & \leq N\Big(\cM,d_2,\frac{u^2}{12\iota}\Big) + N\Big(\cM,d_{\Fin},\frac{u}{3}\Big)\Big(1+\frac{6}{u}\Big)^K \nonumber\\
& \leq N\Big(\cM,d_2,\frac{\tau u^2}{24}\Big) + N\Big(\cM,d_2,\frac{\tau u^2}{72}\Big)\Big(1+\frac{6}{u}\Big)^K,
\end{align}
where we applied Lemma~\ref{lem:shortChords}. By a computation similar to (\ref{eqn:gammaMncEst}) we find
$$\ga_2(\cM_{nc},d_2) \lesssim \sqrt{K_2}\log_+^{1/2}(\tau^{-1}\Del_{d_2}(\cM)) + \sqrt{K}.$$
The first statement now follows from Corollary~\ref{cor:RIPepsIsom}.\par
For the second result we use that for any $v\leq \tau/2$ (cf.\ \cite[Lemma 11]{EfW13})
\begin{align*}
N(\cM,d_2,v) & \leq \Big(\frac{v^2}{4} - \frac{v^4}{64\tau^2}\Big)^{-K/2}V_{\cM}V_{B_{\R^K}}^{-1} \\
& \leq \Big(\frac{v^2}{4} - \frac{v^4}{64\tau^2}\Big)^{-K/2}\Big(\frac{K+2}{4\pi}\Big)^{K/2}V_{\cM}.
\end{align*}
We apply this bound to the terms on the far right hand side of (\ref{eqn:covNumEstReach}) to find for some absolute constants $c,\tilde{c}>0$ and $0<u\leq 1$,
$$N(\cM_{nc},d_2,u) \leq V_{\cM} \Big(\frac{K+2}{4\pi}\Big)^{K/2}\Big(\Big(\frac{c}{\tau u^2}\Big)^K + \Big(\frac{\tilde{c}}{\tau u^2}\Big)^K\Big(1+\frac{6}{u}\Big)^K\Big).$$
A computation similar to (\ref{eqn:gammaMncEst}) shows that
$$\ga_2(\cM_{nc},d_2) \lesssim \sqrt{K}(\log^{1/2}(K)+\log_+^{1/2}(\tau^{-1}) + 1) + \log_+^{1/2}(V_{\cM}).$$
The claim now follows from Corollary~\ref{cor:RIPepsIsom}.
\end{proof}
The first statement in Theorem~\ref{thm:manifoldReach} improves upon \cite[Theorem 3.1]{BaW09}. The second statement extends the result in
\cite[Theorem 2]{EfW13} from Gaussian matrices to general subgaussian maps and removes an additional $\cO(\log(\eps^{-1}))$ dependence of $m$ on $\eps$. The superfluous factor $\cO(\log(\eps^{-1}))$ seems to be an inherent construct of the proof in \cite{EfW13}, as it occurs in several other papers which use essentially the same method \cite{AHY13,Cla08,InN07}.

\section*{Acknowledgement}

The research for this paper was initiated after a discussion with Justin Romberg about compressive parameter estimation. I would like to thank him for providing me with William Mantzel's PhD thesis \cite{Man13}.


\begin{thebibliography}{10}

\bibitem{Ach03}
D.~Achlioptas.
\newblock Database-friendly random projections: {J}ohnson-{L}indenstrauss with
  binary coins.
\newblock {\em J. Comput. System Sci.}, 66(4):671--687, 2003.

\bibitem{AHY13}
P.~Agarwal, S.~Har-Peled, and H.~Yu.
\newblock Embeddings of surfaces, curves, and moving points in {E}uclidean
  space.
\newblock {\em SIAM J. Comput.}, 42(2):442--458, 2013.

\bibitem{Alo03}
N.~Alon.
\newblock Problems and results in extremal combinatorics. {I}.
\newblock {\em Discrete Math.}, 273(1-3):31--53, 2003.

\bibitem{BDD08}
R.~Baraniuk, M.~Davenport, R.~DeVore, and M.~Wakin.
\newblock A simple proof of the restricted isometry property for random
  matrices.
\newblock {\em Constr. Approx.}, 28(3):253--263, 2008.

\bibitem{BaW09}
R.~Baraniuk and M.~Wakin.
\newblock Random projections of smooth manifolds.
\newblock {\em Found. Comput. Math.}, 9(1):51--77, 2009.

\bibitem{BDL08}
G.~Biau, L.~Devroye, and G.~Lugosi.
\newblock On the performance of clustering in {H}ilbert spaces.
\newblock {\em IEEE Trans. Inform. Theory}, 54(2):781--790, 2008.

\bibitem{BiM01}
E.~Bingham and H.~Mannila.
\newblock Random projection in dimensionality reduction: applications to image
  and text data.
\newblock In {\em Proceedings of the seventh ACM SIGKDD international
  conference on Knowledge discovery and data mining}, pages 245--250. ACM,
  2001.

\bibitem{Blu11}
T.~Blumensath.
\newblock Sampling and reconstructing signals from a union of linear subspaces.
\newblock {\em IEEE Trans. Inform. Theory}, 57(7):4660--4671, 2011.

\bibitem{BlD09}
T.~Blumensath and M.~Davies.
\newblock Sampling theorems for signals from the union of finite-dimensional
  linear subspaces.
\newblock {\em IEEE Trans. Inform. Theory}, 55(4):1872--1882, 2009.

\bibitem{CaP11}
E.~Cand{\`e}s and Y.~Plan.
\newblock Tight oracle inequalities for low-rank matrix recovery from a minimal
  number of noisy random measurements.
\newblock {\em IEEE Trans. Inform. Theory}, 57(4):2342--2359, 2011.

\bibitem{CaT06}
E.~Cand{\`{e}}s and T.~Tao.
\newblock Near-optimal signal recovery from random projections: universal
  encoding strategies?
\newblock {\em IEEE Trans. Inform. Theory}, 52(12):5406--5425, 2006.

\bibitem{Cla08}
K.~Clarkson.
\newblock Tighter bounds for random projections of manifolds.
\newblock In {\em Proceedings of the twenty-fourth annual symposium on
  Computational geometry}, pages 39--48. ACM, 2008.

\bibitem{Das99}
S.~Dasgupta.
\newblock Learning mixtures of {G}aussians.
\newblock In {\em 40th {A}nnual {S}ymposium on {F}oundations of {C}omputer
  {S}cience ({N}ew {Y}ork, 1999)}, pages 634--644. IEEE Computer Soc., Los
  Alamitos, CA, 1999.

\bibitem{DaG03}
S.~Dasgupta and A.~Gupta.
\newblock An elementary proof of a theorem of {J}ohnson and {L}indenstrauss.
\newblock {\em Random Structures Algorithms}, 22(1):60--65, 2003.

\bibitem{Dir13}
S.~Dirksen.
\newblock Tail bounds via generic chaining.
\newblock ArXiv:1309.3522.

\bibitem{Don06}
D.~Donoho.
\newblock Compressed sensing.
\newblock {\em IEEE Trans. Inform. Theory}, 52(4):1289--1306, 2006.

\bibitem{DMM11}
P.~Drineas, M.~Mahoney, S.~Muthukrishnan, and T.~Sarl{\'o}s.
\newblock Faster least squares approximation.
\newblock {\em Numer. Math.}, 117(2):219--249, 2011.

\bibitem{EfW13}
A.~Eftekhari and M.~Wakin.
\newblock New analysis of manifold embeddings and signal recovery from
  compressive measurements.
\newblock ArXiv:1306.4748.

\bibitem{ElM09}
Y.~Eldar and M.~Mishali.
\newblock Robust recovery of signals from a structured union of subspaces.
\newblock {\em IEEE Trans. Inform. Theory}, 55(11):5302--5316, 2009.

\bibitem{FoR13}
S.~Foucart and H.~Rauhut.
\newblock {\em A Mathematical Introduction to Compressive Sensing}.
\newblock Birkha{\"{u}}ser, Boston, 2013.

\bibitem{FrM88}
P.~Frankl and H.~Maehara.
\newblock The {J}ohnson-{L}indenstrauss lemma and the sphericity of some
  graphs.
\newblock {\em J. Combin. Theory Ser. B}, 44(3):355--362, 1988.

\bibitem{GNE13}
R.~Giryes, S.~Nam, M.~Elad, R.~Gribonval, and M.~Davies.
\newblock Greedy-like algorithms for the cosparse analysis model.
\newblock {\em To appear in Linear Algebra and its Applications}, 2013.

\bibitem{Gor88}
Y.~Gordon.
\newblock On {M}ilman's inequality and random subspaces which escape through a
  mesh in {${\bf R}\sp n$}.
\newblock In {\em Geometric aspects of functional analysis (1986/87)}, volume
  1317 of {\em Lecture Notes in Math.}, pages 84--106. Springer, Berlin, 1988.

\bibitem{BHW07}
C.~Hegde, M.~Wakin, and R.~Baraniuk.
\newblock Random projections for manifold learning.
\newblock In {\em Advances in neural information processing systems}, pages
  641--648, 2007.

\bibitem{InM98}
P.~Indyk and R.~Motwani.
\newblock Approximate nearest neighbors: Towards removing the curse of
  dimensionality.
\newblock In {\em Proceedings of the Thirtieth Annual ACM Symposium on Theory
  of Computing}, STOC '98, pages 604--613, New York, NY, USA, 1998. ACM.

\bibitem{InN07}
P.~Indyk and A.~Naor.
\newblock Nearest-neighbor-preserving embeddings.
\newblock {\em ACM Trans. Algorithms}, 3(3):Art. 31, 12, 2007.

\bibitem{JaW13}
T.~Jayram and D.~Woodruff.
\newblock Optimal bounds for {J}ohnson-{L}indenstrauss transforms and streaming
  problems with subconstant error.
\newblock {\em ACM Trans. Algorithms}, 9(3):Art. 26, 17, 2013.

\bibitem{JoL84}
W.~Johnson and J.~Lindenstrauss.
\newblock Extensions of {L}ipschitz mappings into a {H}ilbert space.
\newblock In {\em Conference in modern analysis and probability ({N}ew {H}aven,
  {C}onn., 1982)}, volume~26 of {\em Contemp. Math.}, pages 189--206. Amer.
  Math. Soc., Providence, RI, 1984.

\bibitem{KMV12}
A.~Kalai, A.~Moitra, and G.~Valiant.
\newblock Disentangling gaussians.
\newblock {\em Commun. ACM}, 55(2):113--120, February 2012.

\bibitem{KlM05}
B.~Klartag and S.~Mendelson.
\newblock Empirical processes and random projections.
\newblock {\em J. Funct. Anal.}, 225(1):229--245, 2005.

\bibitem{Lee97}
J.~Lee.
\newblock {\em Riemannian manifolds}.
\newblock Springer-Verlag, New York, 1997.

\bibitem{LuD08}
Y.~Lu and M.~Do.
\newblock A theory for sampling signals from a union of subspaces.
\newblock {\em IEEE Trans. Signal Process.}, 56(6):2334--2345, 2008.

\bibitem{MaM12}
O.~Maillard and R.~Munos.
\newblock Linear regression with random projections.
\newblock {\em J. Mach. Learn. Res.}, 13:2735--2772, 2012.

\bibitem{Man13}
W.~Mantzel.
\newblock {\em Parametric estimation of randomly compressed functions}.
\newblock Ph{D}. {T}hesis, {G}eorgia {I}nsitute of {T}echnology, 2013.

\bibitem{MaR13}
W.~Mantzel and J.~Romberg.
\newblock Compressive parameter estimation.
\newblock Preprint, 2013.

\bibitem{MRS12}
W.~Mantzel, J.~Romberg, and K.~Sabra.
\newblock Compressive matched-field processing.
\newblock {\em The Journal of the Acoustical Society of America}, 132(1), 2012.

\bibitem{Mat08}
J.~Matou{\v{s}}ek.
\newblock On variants of the {J}ohnson-{L}indenstrauss lemma.
\newblock {\em Random Structures Algorithms}, 33(2):142--156, 2008.

\bibitem{MPT07}
S.~Mendelson, A.~Pajor, and N.~Tomczak-Jaegermann.
\newblock Reconstruction and subgaussian operators in asymptotic geometric
  analysis.
\newblock {\em Geom. Funct. Anal.}, 17(4):1248--1282, 2007.

\bibitem{MPT08}
S.~Mendelson, A.~Pajor, and N.~Tomczak-Jaegermann.
\newblock Uniform uncertainty principle for {B}ernoulli and subgaussian
  ensembles.
\newblock {\em Constr. Approx.}, 28(3):277--289, 2008.

\bibitem{NDE13}
S.~Nam, M.~Davies, M.~Elad, and R.~Gribonval.
\newblock The cosparse analysis model and algorithms.
\newblock {\em Appl. Comput. Harmon. Anal.}, 34(1):30--56, 2013.

\bibitem{NSW08}
P.~Niyogi, S.~Smale, and S.~Weinberger.
\newblock Finding the homology of submanifolds with high confidence from random
  samples.
\newblock {\em Discrete Comput. Geom.}, 39(1-3):419--441, 2008.

\bibitem{RWH10}
M.~Raginsky, R.~Willett, Z.~Harmany, and R.~Marcia.
\newblock Compressed sensing performance bounds under {P}oisson noise.
\newblock {\em IEEE Trans. Signal Process.}, 58(8):3990--4002, 2010.

\bibitem{RSS13}
H.~Rauhut, R.~Schneider, and Z.~Stojanac.
\newblock Low-rank tensor recovery via iterative hard thresholding.
\newblock Preprint.

\bibitem{RFP10}
B.~Recht, M.~Fazel, and P.~Parrilo.
\newblock Guaranteed minimum-rank solutions of linear matrix equations via
  nuclear norm minimization.
\newblock {\em SIAM Rev.}, 52(3):471--501, 2010.

\bibitem{Sar06}
T.~Sarl{\'{o}}s.
\newblock Improved approximation algorithms for large matrices via random
  projections.
\newblock In {\em Foundations of Computer Science, 2006. FOCS '06. 47th Annual
  IEEE Symposium on}, pages 143--152, Oct 2006.

\bibitem{Sch00}
L.~Schulman.
\newblock Clustering for edge-cost minimization.
\newblock In {\em Proceedings of the Thirty-second Annual ACM Symposium on
  Theory of Computing}, STOC '00, pages 547--555, New York, NY, USA, 2000. ACM.

\bibitem{Ste73}
G.~Stewart.
\newblock Error and perturbation bounds for subspaces associated with certain
  eigenvalue problems.
\newblock {\em SIAM Rev.}, 15:727--764, 1973.

\bibitem{Tal87}
M.~Talagrand.
\newblock Regularity of {G}aussian processes.
\newblock {\em Acta Math.}, 159(1-2):99--149, 1987.

\bibitem{Tal01}
M.~Talagrand.
\newblock Majorizing measures without measures.
\newblock {\em Ann. Probab.}, 29(1):411--417, 2001.

\bibitem{Tal05}
M.~Talagrand.
\newblock {\em The generic chaining}.
\newblock Springer-Verlag, Berlin, 2005.

\bibitem{Vem04}
S.~Vempala.
\newblock {\em The random projection method}.
\newblock DIMACS Series in Discrete Mathematics and Theoretical Computer
  Science, 65. American Mathematical Society, Providence, RI, 2004.

\bibitem{Ver13}
N.~Verma.
\newblock Distance preserving embeddings for general {$n$}-dimensional
  manifolds.
\newblock Preprint, 2013.

\end{thebibliography}
\end{document}